\RequirePackage{snapshot}
%
\documentclass[]{llncs}
\usepackage{graphicx}
%

\usepackage{hyperref}
\hypersetup{
  unicode=true,
  pdfkeywords={session types; applied type system; dependent types; linear types; multirole logic; concurrency; deadlock-free},
  pdftitle={Multiparty Dependent Session Types \break\small (Extended Abstract)},
  pdfauthor={},
  colorlinks=true,
  linkcolor=red,
  citecolor=green,
  urlcolor=cyan,
  breaklinks=true
}




\usepackage{mathrsfs}
\usepackage[intlimits]{amsmath}
\usepackage{amsfonts,amssymb,mathtools}
\DeclareSymbolFontAlphabet{\mathbb}{AMSb}




\usepackage{varwidth}
\usepackage{wrapfig}

\usepackage{todonotes}

\usepackage{longtable,booktabs}

\usepackage{colonequals} 
\usepackage[a]{esvect}   
\usepackage{prftree}
\usepackage{adjustbox}   
\usepackage{pgfmath}
\usepackage{xfrac}
\usepackage{stmaryrd}    
\usepackage{inconsolata} 
\usepackage{array}       
\usepackage{rotating}    
\usepackage{pdflscape}   
\usepackage{soul}        






\spnewtheorem*{subcase}{Subcase}{\itshape}{\rmfamily}

\usepackage{tikz}
\usepackage{tikz-cd}
\usetikzlibrary{arrows,shapes,petri,decorations.pathmorphing}
\tikzcdset{arrows={line width=1.414*rule_thickness}}

\usepackage{listings}
\lstset{
	basicstyle=\footnotesize\ttfamily,
	boxpos=c,
	frame=none,
	literate={~}{{\fontfamily{ptm}\texttildelow}}{1}
}

\usepackage[capitalise,noabbrev,nameinlink]{cleveref} 
\crefname{sec}{Section}{Sections}
\crefname{prop}{Proposition}{Propositions}
\crefname{lemma}{Lemma}{Lemmas}
\crefname{def}{Definition}{Definitions}
\crefname{thm}{Theorem}{Theorems}
\crefname{app}{Appendix}{Appendices}
\crefname{eg}{Example}{Examples}
\crefname{lst}{Listing}{Listings}
\crefname{remark}{Remark}{Remarks}

\setcounter{topnumber}{3}
\setcounter{bottomnumber}{3}
\setcounter{totalnumber}{4}

\makeatletter
\newcommand{\leqnomode}{\tagsleft@true}
\newcommand{\reqnomode}{\tagsleft@false}
\makeatother


\IfFileExists{upquote.sty}{\usepackage{upquote}}{}

\providecommand{\tightlist}{\setlength{\itemsep}{0pt}\setlength{\parskip}{0pt}}

\begin{document}
\author{Hanwen Wu\inst{1} \and Hongwei Xi\inst{1}}
\institute{Boston University, Boston MA 02215, USA \\
           \email{\{hwwu,hwxi\}@bu.edu}}

\newcommand{\rulename}[1]{\text{{\upshape\bf #1}}} 
\newcommand{\ptitle}[1]{{\bf #1}\kern.5em}
\def\ATS{\mathcal{A}\mathcal{T}\kern-.1em\mathcal{S}}
\def\zmq{\O{}MQ}
\def\defeq{\coloncolonequals}  

\def\Lzero{\lambda_0}
\def\Lpi{\lambda_0^{\pi}}
\def\Ldep{\lambda_{\forall,\exists}}
\def\Ldeppi{\lambda_{\forall,\exists}^{\pi}}

\newcommand{\sta}[1]{{\color{blue!60!black}{#1}}}

\def\sortt(#1){\sta{{\color{black}{\text{\rm\it #1}}}}} 
\def\sortm(#1){\sta{{\color{black}{#1}}}}               

\def\sorttc(#1){\text{\rm\it #1}}

\def\stset{\sortt(set)}
\def\stint{\sortt(int)}
\def\stbool{\sortt(bool)}
\def\sttype{\sortt(type)}
\def\stvtype{\sortt(vtype)}
\def\ststype{\sortt(stype)}
\def\strole{\sortt(role)}
\def\stsigma{\sortm(\sigma)}
\def\stnat{\sortt(nat)}

\def\stsetc{\text{\rm\it set}}
\def\stintc{\text{\rm\it int}}
\def\stboolc{\text{\rm\it bool}}
\def\sttypec{\text{\rm\it type}}
\def\stvtypec{\text{\rm\it vtype}}
\def\ststypec{\text{\rm\it stype}}
\def\strolec{\text{\rm\it role}}
\def\stnatc{\text{\rm\it nat}}

\newcommand{\stermt}[1]{\sta{\text{\rm\it #1}}\kern0.1em} 
\newcommand{\stermm}[1]{\sta{#1}}               
\newcommand{\stermtc}[1]{\text{\rm\it #1}}      
\newcommand{\stermmc}[1]{#1}      				

\def\subtype{\mathbin{\stermm{\leq_{ty}}}}
\def\sassert{\mathbin{\stermm{\wedge}}}
\def\sguard{\mathbin{\stermm{\supset}}}

\def\sr{\stermm{r}}
\def\snegr{\stermm{\neg r}}

\def\strue{\stermm{\top}}
\def\sfalse{\stermm{\bot}}

\def\rs{\stermtc{rs}}   
\def\rsfull{\rs_{\fullset}}
\def\scx{\stermt{scx}}
\def\scc{\stermt{scc}}
\def\scf{\stermt{scf}}


\def\typet(#1){\stermt{#1}}  
\def\typem(#1){\stermm{#1}}
\def\typetc(#1){\stermtc{#1}}  
\def\typemc(#1){\stermmc{#1}}

\newcommand{\chan}[2]{\typem({\typet(chan)({#1},\stypem({#2}))})}
\newcommand{\chanc}[2]{\typemc({\typetc(chan)({#1},\stypemc({#2}))})}

\def\sunit{\typet(unit)} 
\def\sunitc{\typetc(unit)}
\def\sint{\typet(int)}
\def\sintc{\typetc(int)}
\def\sintn(#1){\sta{\sint(\stermm{#1})}}
\def\sstring{\typet(string)}
\def\sstringc{\typetc(string)}
\def\sbool{\typet(bool)}
\def\sboolb(#1){\sta{\sbool(\stermm{#1})}}
\def\stau{\typem(\tau)}
\def\shattau{\typem(\hat\tau)}

\def\stypet(#1){{\color{cyan}{\texttt{#1}}}}  
\def\stypem(#1){{\color{cyan}{#1}}} 		  
\def\stypemc(#1){{#1}} 		  

\def\stype{\stypet} 
\def\spi{\stypem(\pi)}

\def\msg{\stypet(msg)}
\def\seqs{\mathbin{\stypem(\coloncolon)}}
\def\quan{\stypet(quan)}
\def\fix{\stypet(fix)}
\def\nil{\stypet(end)}
\def\branch{\stypet(branch)}
\def\ite{\stypet(ite)}
\def\init{\stypet(init)}

\newcommand{\dyn}[1]{{\color{red!50!black}{#1}}}

\def\dkw(#1){\text{\rm\bf #1}}

\def\dlam{\dkw(lam)~}    
\def\dfix{\dkw(fix)~}    
\def\dapp{\dkw(app)}     
\def\dfst{\dkw(fst)}     
\def\dsnd{\dkw(snd)}     
\def\din{~\dkw(in)~}     
\def\dif{\dkw(if)~}      
\def\dthen{~\dkw(then)~} 
\def\delse{~\dkw(else)~} 

\newcommand{\dtermt}[1]{\dyn{\text{\rm #1}}} 
\newcommand{\dtermm}[1]{\dyn{#1}}            

\def\dtrue{\dtermt{true}}
\def\dfalse{\dtermt{false}}
\def\dpair(#1,#2){\dtermm{\langle{#1},{#2}\rangle}}           
\def\dunit{\dtermm{\langle\rangle}}                           
\newcommand{\dite}[3]{\dtermm{\dif{#1}\dthen{#2}\delse{#3}}}  
\newcommand{\dlet}[3]{\dtermm{\dkw(let)~{#1}={#2}\din{#3}}}   

\def\dguardi{\dtermm{\supset^+}}  
\def\dguarde{\dtermm{\supset^-}}
\def\dassert{\dtermm{{\wedge}}}
\def\dforalli{\dtermm{\forall^+}}
\def\dforalle{\dtermm{\forall^-}}
\def\dexists{\dtermm{{\exists}}}

\def\dcx{\dtermt{dcx}}
\def\dcc{\dtermt{dcc}}
\def\dcf{\dtermt{dcf}}
\def\dcr{\dtermt{dcr}}
\def\dxf{\dtermm{\text{\rm\it xf}}}
\def\dtid{\dtermt{tid}}
\def\dpool{\dtermm{\Pi}}


\def\dapi(#1){\textrm{\color{red}{#1}}}

\def\dfork{\dapi(fork)}
\def\dsend{\dapi(send)}
\def\drecv{\dapi(recv)}
\def\dbsend{\dapi(bsend)}
\def\dbrecv{\dapi(brecv)}
\def\dskip{\dapi(skip)}
\def\dclose{\dapi(close)}
\def\dwait{\dapi(wait)}
\def\dcut{\dapi(cut)}
\def\dappend{\dapi(append)}
\def\dunify{\dapi(unify)}
\def\dexify{\dapi(exify)}
\def\delim{\dapi(elim)}
\def\dsplit{\dapi(split)}
\def\drecurse{\dapi(recurse)}
\def\doffer{\dapi(offer)}
\def\dchoose{\dapi(choose)}
\def\daccept{\dapi(accept)}
\def\drequest{\dapi(request)}


\def\res{{\color{black}{\mathcal{R}}}}      
\def\sig{{\color{black}{\mathcal{S}}}}      
\def\ectx{{\color{black}{E}}}               
\def\cctx{{\color{black}{C}}}               

\def\map(#1,#2){[{#1}\mapsto{#2}]}         
\newcommand{\subst}[2]{[\sfrac{#2}{#1}]}   


\def\reduce{\mathrel{\longrightarrow}}        
\def\reduceall{\mathrel{\longrightarrow^*}}
\def\vreduce{\stackrel{v}{\longrightarrow}}   
\def\creduce{\stackrel{P}{\reduce}}           
\def\creduceall{\mathrel{\overset{P}{\reduce}\!\!^*}}
\def\dfreduce{\mathrel{\leadsto}}             
\def\dfreducevia(#1){\stackrel{\dtermm{{#1}}}{\dfreduce}} 
\def\dfnormal{\not\dfreduce}
\newcommand{\redex}[1]{\underline{#1}}


\def\channels{\textsf{\textup{channels}}}
\def\endpoints{\textsf{\textup{endpoints}}}
\def\roles{\textsf{\textup{roles}}}

\def\fv{\textit{fv}\kern0.1em}        
\def\fn{\textit{fn}\kern0.1em}        
\def\dom{\textit{dom}\kern0.1em}      

\def\D{\mathscr{D}}         
\def\height{\textit{ht}}    

\def\M{\mathcal{M}}        


\def\ploop{\textsf{\textup{loop}}}
\def\pblocked{\textsf{\textup{blocked}}}
\def\pregular{\textsf{\textup{regular}}}
\def\pdfreducible{\textsf{\textup{df-reducible}}}
\def\pnormal{\textsf{\textup{df-normal}}}
\def\pdeadlocked{\textsf{\textup{deadlocked}}}
\def\pconsistent{\textsf{\textup{consistent}}}
\def\pmatch{\textsf{\textup{match}}}
\def\prelaxed{\textsf{\textup{relaxed}}}


\def\tensor{\varotimes}
\def\parr{\bindnasrepma}
\def\with{\binampersand}

\def\f{{\mathscr{F}}}
\def\uf{{\mathscr{U}}}
\def\conju{\uf}
\def\forallu{\uf^{\lambda}}
\def\mconju{\uf^\times}
\def\aconju{\uf^+}
\def\expou{\uf^\ast}
\def\impliesu{\uf^f}

\def\fullset{\overline{\varnothing}}
\newcommand{\inv}[1]{{#1}^{\raisebox{.2ex}{$\scriptscriptstyle-\!1$}}}

\newcommand{\iform}[2]{[{#1}]_{{#2}}}
\newcommand{\iformc}[2]{\cp{\iform{#1}{#2}}}


\def\cli{\stermm{C}}
\def\srv{\stermm{S}}



\def\myspacinggather{\setlength{\jot}{1em}}
\title{Multiparty Dependent Session Types \break\small (Extended Abstract)}
%


%
\maketitle              
%

\begin{abstract}
Programs are more distributed and concurrent today than ever before, and
structural communications are at the core. Constructing and debugging
such programs are hard due to the lack of formal
specification/verification of concurrency. This work formalizes the
first \emph{multiparty dependent session types} as an expressive and
practical type discipline for enforcing communication protocols. The
type system is formulated in the setting of multi-threaded
\(\lambda\)-calculus with inspirations from multirole logic, a
generalization of classical logic we discovered earlier. We prove its
soundness by a novel technique called deadlock-freeness reducibility.
The soundness of the type system implies communication fidelity and
absence of deadlock.

\keywords{session types \and applied type system \and dependent types \and linear types \and multirole logic \and concurrency \and deadlock-free}
\end{abstract}


%

\hypertarget{introduction}{%
\section{Introduction}\label{introduction}}

\emph{Session type} \cite{Honda:1993eh,Takeuchi:1994bv,Honda:1998fm} is
a typed formalism for concurrency. A \emph{session} is an abstraction of
structured communication among two or more logical \emph{parties}
connected by a communication \emph{channel}. Session types denote the
structures of communications, or \emph{protocols}, and are assigned to
communication channels. In a typical session type system, subjection
reduction ensures \emph{session fidelity} and progress property ensures
\emph{absence of deadlock}. As a result, well-session-typed programs
cannot make communication errors.

In this work, we present our research results on a practical multiparty
dependent session type system. It is a system that can describe more
than two participants (multiparty), that supports quantification and
polymorphism in the session type (dependent), and that it is formulated
in the settings of multi-threaded \(\lambda\)-calculus (practical).
Other features include higher-order sessions (sending channels over
channels), forwarding (connecting channels with channels), recursive
sessions (as an extension), etc. A well-session-typed program strictly
follows the session protocol (subject reduction) and is absent of
deadlock (progress). To the best knowledge of the authors, this is the
first formulation of a multiparty dependent session type system.

\hypertarget{simple-examples}{%
\subsection{Simple Examples}\label{simple-examples}}

We first fix some terminologies. A \emph{session} has several
\emph{parties} connected via a \emph{channel}. \emph{Session types}
encodes communication structures \emph{globally}. A channel has many
endpoints. Each party is (typically) implemented as a thread, and each
thread function is given an \emph{endpoint}. Endpoints are assigned an
\emph{endpoint type} representing the \emph{local protocl} from the
perspective of this particular party.

\begin{example}
\label[eg]{eg:helloworld} A simple ``Hello World'' protocol
\stypet(hello) between server \(\srv\) and client \(\cli\), can be
described using session types as follows. \[
\stypet(hello)
\defeq
\stypem({\msg(C,\sstringc) \seqs \msg(S,\sstringc) \seqs \nil(C)})
\] This protocol specifies the communications globally, where \(\cli\)
first sends a message of type \(\sstring\), followed by \(\srv\) sending
also a \(\sstring\), followed by \(\cli\) terminating the session while
\(\srv\) waits for the termination. Locally at each party, \(\cli\)
holds an endpoint of type \(\chan{C}{\stypet(hello)}\), while \(\srv\)
holds an endpont of type \(\chan{S}{\stypet(hello)}\) where the linear
type constructor \(\stermt{chan}\) combines \(\stypet(hello)\) with a
role \(\cli\) or \(\srv\) to form local endpoint types. A program for
\(\cli\) can be written as \(\dtermt{cli}\), while the server is
\(\dtermt{srv}\). \begin{align*}
\dtermt{cli} & \defeq\dtermm{
    \dlam c.\dlet{c}{\dbsend(c,\dtermt{'hello'})}{}
    \dlet{\dpair(c,\dtermt{rpl})}{\dbrecv(c)}{}
    \dclose(c)
}\\
\dtermt{srv} & \defeq\dtermm{
    \dlam s.\dlet{\dpair(s,\dtermt{req})}{\dbrecv(s)}{}
    \dlet{s}{\dbsend(s,\dtermt{'world'})}{}
    \dwait(s)
}\\
\dtermt{pool}&\defeq\dtermm{
    \dlet{s}{\dfork(\dtermt{cli})}{\dapp(\dtermt{srv}, s)}    
}
\end{align*} \(\dbsend\)/\(\dbrecv\) etc,. are Session APIs provided by
our type system used to realize the dynamic semantics of session types
by interpreting them locally at each party. The type system will
guarantee that the correct API is invoked at the correct stage of
protocol in the correct order and that all endpoints are invoking
\emph{dual/compatible} APIs in order to make progress. Finally in
\(\dtermt{pool}\), \(\dfork\) spawns a new thread with thread function
\(\dtermt{cli}\), and connects to that thread with a session typed
channel while returning the other endpoint \(\dtermm{s}\) to the caller.
\end{example}

\begin{example}
\label[eg]{eg:array} With quantification in the session types, one can
safely send an array by firstly sending a length \(n\) followed by \(n\)
repeated messages for \(n\) elements of the array. \[
\begin{array}{rcl}
\stypem({\stypet(array)(\tau{:}\sttypec)})  &\defeq& \stypem({\quan(C, \lambda n{:}\stintc.\msg(C, \sintc(n))\seqs\stypet(repeat)(\tau, n))})
\end{array}
\] In the above definition, \(\sintn(n)\) is a singleton dependent type
for an integer of value \(n\), \(\quan\) is a session type constructor
that represents a quantifier, where \(\stermm{\lambda n{:}\stintc}\) is
the actual binder. The quantifier in this case will be interpreted as
universal by \(\cli\), and existential by all others. For instance, the
endpoint at \(\cli\) will have linear type
\(\chan{C}{\stypet(array)(\tau)}\), and after invoking API \(\dunify\)
on this endpoint, the type becomes
\(\stermm{\forall n{:}\stintc.\chan{C}{\msg(C,\sintc(n))\seqs\stypet(repeat)(\tau,n)}}\).
The bound variable \(\stermm{n}\) ensures that the length of the array
equals the number of repeated messages that follows.
\end{example}

\hypertarget{contributions}{%
\subsection{Contributions}\label{contributions}}

The main contribution lies in the formalization of multiparty dependent
session types and its deadlock-freeness proof via a novel technique
named deadlock-freeness reducibility. We summarize the contributions as
follows.
\begin{itemize}
\tightlist
\item
  Formalized the first multiparty dependent session type system
  (\(\Ldeppi\)) and proved its soundness.
\item
  Formalized deadlock-freeness reducibility. It is a pool reduction
  invariant, even in the presence of higher-order sessions and various
  forms of forwarding. The progress property directly dependents on the
  df-reducibility of thread pools.
\item
  Discovered and formulated classical multirole logics (MRL) and linear
  multirole logic (LMRL) as generalizations of classical logic (LK) and
  classical linear logic (CLL). We proved the admissibility of a cut
  rule that combines more than two sequents in both MRL and LMRL, thus
  generalizing Gentzen's celebrated results of cut-elimination.
\item
  We report that we have work-in-progress implementations. Due to space
  limits, we omit this part and please see
  \url{http://multirolelogic.org}.
\end{itemize}
\hypertarget{overview}{%
\subsection{Overview}\label{overview}}

This work has three parts, and we focus mostly on the third part below.
\emph{First}, our technical foundation is \(\ATS\) \cite{Xi:2003kl}. We
very briefly mention the approaches to formulating types, reasoning
about resources, and adding pre-defined functions in \(\ATS\).
\emph{Second}, we mention the intuitions of MRL/LMRL and present the
generalized cut rule combining more than two sequents. We discuss how
LMRL deeply influenced the design of \(\Ldeppi\). \emph{Third}, we
formulate \(\Ldeppi\) in \(\ATS\) and show the deadlock-freeness
reducibility proof.

\hypertarget{applied-type-system}{%
\section{Applied Type System}\label{applied-type-system}}

\(\Ldeppi\) is built on Applied Type System (\(\ATS\)), a multi-threaded
\(\lambda\)-calculus with advanced types. \(\ATS\)
\cite{Xi:2003kl,Xi:2016vd} is the successor of Dependent ML
\cite{Xi:2007te,Xi:1998wa,Xi:1999bh}. As a general framework for
formalizing type systems, \(\ATS\) supports dependent types of
DML-style, linear types, theorem proving, general recursion, among other
features. Due to space limits, we will only cover selected features of
\(\ATS\) to prepare for the development of \(\Ldeppi\). Please refer to
\cite{Xi:2003kl,Xi:2016vd} for a full treatment.

\hypertarget{preview}{%
\subsection{Preview}\label{preview}}

The key salient feature of \(\ATS\) lies in the complete separation
between \emph{statics}, where types are formed and reasoned about, and
\emph{dynamics}, where programs are constructed and evaluated.

Types in the dynamics are terms of the statics, where statics can be
regarded as a simply-typed \(\lambda\)-calculus whose ``types'' are
called \emph{sorts}. Types can depend on static terms (hence
\emph{dependent} types). In \(\Ldeppi\), we will be formulating session
types \(\spi\) in the statics, so that it can be used as an index for
the endpoint type, i.e., \(\chan{\rs}{\pi}\). After having terms of
endpoint types, we will provide programmers with pre-defined functions
called \emph{Session APIs} to manipulate them, like
\(\dbsend\)/\(\dbrecv\). They are formulated in the dynamics as constant
functions \(\dcf\) in \(\ATS\). Their pre-defined types, called dc-types
(dynamic constant types), will be carefully designed and recorded in a
pre-defined context called signature \(\sig\) to perform type-checking.
Their reductions in a thread pool are also formulated and reasoned
about.

Endpoints are resources, meaning they cannot be randomly copied or
discarded. Resource ownership has to be tracked by the type system to
prevent bugs like memory leaks, use-after-free, etc. The support for
linear types in \(\ATS\) provides a mechanism for reasoning about
resources. We will define endpoints as dynamic resources \(\dcr\),
define the function \(\rho(\cdot)\) (\cref{fig:mtlcallrho}) to collect
resources, define its consistency conditions to prevent ill-formed
channels (e.g., missing endpoints, duplicated endpoints), and prove that
resource consistency is an invariant during reduction. Linear types also
provide a way to track the progress of protocols. For instance, given an
endpoint of \(\chan{C}{\msg(C,\sintc)\seqs\pi}\), invoking \(\dbsend\)
at party \(\cli\) will send a message, \emph{consume this linear
endpoint}, and return the endpoint \emph{with a new type}
\(\chan{C}{\pi}\). And only this endpoint with the continuation of the
protocol is available for use in the typing context.

\hypertarget{ats}{%
\subsection{\texorpdfstring{\(\ATS\)}{\textbackslash{}ATS}}\label{ats}}

The development of \(\ATS\) is fairly standard. It has an ML-like syntax
(\cref{fig:mtlcallsyntax}), non-linear/linear split context typings
(\cref{fig:mtlcalltyping}), and call-by-value, left-to-right, reduction
semantics (\cref{fig:mtlcallred}). Since \(\ATS\) is not the
contribution of this work, we will only illustrate some concepts using
examples.

\begin{figure}
Syntax of Statics
\[
\begin{array}{rrcl}
\text{base sorts}       \quad & \sortm(b)      & \defeq & \sortt(int) \mid \sortt(bool)                          \mid \sortt(type)                              \mid \sortt(vtype)      \\
\text{sorts}            \quad & \sortm(\sigma) & \defeq & \sortm(b)   \mid \sortm(\sigma_1 \rightarrow \sigma_2) \\
\text{static constants} \quad & \scx           & \defeq & \scc        \mid \scf                                  \\
\text{static terms}     \quad & \stermm{s}     & \defeq & \stermm{a}  \mid \stermm{\scx(\vv{s})}                 \mid \stermm{\lambda a {:} \sigma . s} \mid \stermm{s_1 (s_2)} 
\end{array}
\]

Syntax of Types
\[
\begin{array}{rrcl}
\text{types}        \quad & \typem(\tau)     & \defeq & \typem(a) \mid \typem(\delta(\vv{s}))     \mid \sunit                    \mid \typem(\tau_1\times\tau_2)            \mid \typem(\tau_1\rightarrow\tau_2)
                                                                  \mid \typem(P\sguard\tau)       \mid \typem(P\sassert\tau)     \mid \typem({\forall a{:}\sigma.\tau})     \mid \typem({\exists a{:}\sigma.\tau})     \\
\text{linear types} \quad & \typem(\hat\tau) & \defeq & \typem(a) \mid \typem(\hat\delta(\vv{s})) \mid \typem(\tau)              \mid \typem(\hat\tau_1\otimes\hat\tau_2)   \mid \typem(\hat\tau_1\multimap\hat\tau_2)
                                                                  \mid \typem(P\sguard\hat\tau)   \mid \typem(P\sassert\hat\tau) \mid \typem({\forall a{:}\sigma.\hat\tau}) \mid \typem({\exists a{:}\sigma.\hat\tau})
\end{array}
\]

Syntax of Dynamics
\[ %
\begin{array}{rrcl}
\text{constants} \quad & \dcx       & \defeq & \dcc                  \mid \dcf                     \\
\text{terms}     \quad & \dtermm{e} & \defeq & \dtermm{x}            \mid \dtermm{\dcx(\vv{e})}    \mid \dcr                  \mid \dtermm{\dlam x.e}           \mid \dtermm{\dapp(e_1,e_2)}                   \mid                    \\
                       &            &        & \dunit                \mid \dtermm{\dpair(e_1,e_2)} \mid \dtermm{\dfst(e)}     \mid \dtermm{\dsnd(e)}            \mid \dtermm{\dlet{\dpair(x_1,x_2)}{e_1}{e_2}} \mid                    \\
                       &            &        & \dtermm{\dguardi(v)}  \mid \dtermm{\dguarde(e)}     \mid \dtermm{\dassert(e)}  \mid \dlet{\dassert(x)}{e_1}{e_2} \mid \\
                       &            &        & \dtermm{\dforalli(v)} \mid \dtermm{\dforalle(e)}    \mid \dtermm{\dexists(e)}  \mid \dlet{\dexists(x)}{e_1}{e_2} \\
\text{values}    \quad & \dtermm{v} & \defeq & \dtermm{x}            \mid \dcr                     \mid \dtermm{\dcc(\vv{v})} \mid \dunit                       \mid \dtermm{\dpair(v_1,v_2)}                  \mid \dtermm{\dlam x.e} \mid 
                                               \dtermm{\dguardi(v)}  \mid \dtermm{\dassert(v)}     \mid \dtermm{\dforalli(v)} \mid \dtermm{\dexists(v)}         \\
\text{pools}     \quad & \dpool     & \defeq & \dtermm{\varnothing}  \mid \dtermm{\dpool,t{:}e}
\end{array}
\]

Some Signatures of Dynamic Constants
\[
\begin{array}{rcl}
\dfork  & : & \typem({(\sunit\multimap\sunit)\Rightarrow\sunit}) 
\end{array}
\]

\caption{Selected Syntax of $\ATS$}
\label{fig:mtlcallsyntax}
\end{figure}
\begin{figure}

\begin{adjustbox}{varwidth=1.3\textwidth,max width=\textwidth}
\input{mtlcdeptyping}
\end{adjustbox}
\caption{Selected Typing Rules of $\ATS$}
\label{fig:mtlcalltyping}
\end{figure}
\begin{figure}
\[
\begin{array}{rclrclrclrcl}
\rho(\dcr)                                           & = & \{\dcr\}                                               &
\rho(\dtermm{x})                                     & = & \varnothing                                            &
\rho(\dunit)                                         & = & \varnothing                                            &
\rho(\dtermm{\dpair(e_1, e_2)})                      & = & \rho(\dtermm{e_1})\uplus\rho(\dtermm{e_2})             \\
\rho(\dtermm{\dfst(e)})                              & = & \rho(\dtermm{e})                                       &
\rho(\dtermm{\dsnd(e)})                              & = & \rho(\dtermm{e})                                       &
\rho(\dtermm{\dlam x.e})                             & = & \rho(\dtermm{e})                                       &
\rho(\dtermm{\dapp(e_1,e_2)})                        & = & \rho(\dtermm{e_1})\uplus\rho(\dtermm{e_2})             
\end{array}
\]
\caption{Selected Definition of $\rho(\cdot)$}
\label{fig:mtlcallrho}
\end{figure}
\begin{figure}
Redexes
\[
\begin{array}{rcl}
\text{pure redex}   & \defeq & \dtermm{\dapp(\dlam x.e, v)}                \mid \dtermm{\dlet{\dpair(x_1,x_2)}{\dpair(v_1,v_2)}{e}} \mid \dtermm{\dfst(\dpair(v_1,v_2))}             \mid                                  \\
                    &        & \dtermm{\dsnd(\dpair(v_1,v_2))}             \mid \dtermm{\dguarde(\dguardi(v))}                      \mid \dtermm{\dlet{\dassert(x)}{\dassert(v)}{e}} \mid \dtermm{\dforalle(\dforalli(v))} \mid \\
                    &        & \dtermm{\dlet{\dexists(x)}{\dexists(v)}{e}} \\
\text{ad-hoc redex} & \defeq & \dtermm{\dcf(\vv v)}
\end{array}
\]

Contractums
\[
\begin{array}{lrcll}
\text{(\rulename{tr-beta})}   & \dtermm{\dapp(\dlam x.e, v)}                        & \vreduce & \dtermm{e\map(x,v)}                 \\
\text{(\rulename{tr-guard})}  & \dtermm{\dguarde(\dguardi(v))}                      & \vreduce & \dtermm{v}                          \\

\text{(\rulename{tr-adhoc})} & \dtermm{\dcf(\vv{v})} & \vreduce & \dtermm{v'} \quad\text{if $\dcf$ is defined at $\dtermm{\vv{v}}$ and the result is $\dtermm{v'}$} \\
&&\vdots&
\end{array}
\]


Reductions
\[
\begin{array}{rcll}
\dtermm{E[e]} & \reduce & \dtermm{E[e']} & \text{if $\dtermm{e} \vreduce \dtermm{e'}$} 
\end{array}
\]

Pool Reductions
\[
\begin{array}{lrcll}
\text{(\rulename{pr-fork})} & \dtermm{\dpool,t{:} E[\dfork(\dlam x.e)]} & \creduce & \multicolumn{2}{l}{\dtermm{\dpool,t{:} E[\dunit], t'{:} e\map(x,\dunit)}}             \\
\text{(\rulename{pr-gc})}   & \dtermm{\dpool,t{:}\dunit}                & \creduce & \dpool                      &  \text{if $t > 0$}                          \\
\text{(\rulename{pr-lift})} & \dtermm{\dpool,t{:} e}                    & \creduce & \dtermm{\dpool,t{:} e'} &  \text{if $\dtermm{e} \reduce \dtermm{e'}$}
\end{array}
\]

\caption{Selected Reductions in $\ATS$}
\label{fig:mtlcallred}
\end{figure}

\begin{example}
Consider the dynamic term (in red) and its type (in blue) below. \[
\dtermm{\dlam x.\dlam y.x/y} 
: \stermm{\forall m{:}\stintc.\forall n{:}\stintc.(n\neq 0)\sguard(\stermt{int}(m),\stermt{int}(n))\rightarrow\stermt{int}(m/n)}
\]
\end{example}

The term represents a function that does integer division (with the
result rounded to the nearest integer.) Given a static integer
\(\stermm{i}\), \(\stermm{\stermt{int}(i)}\) is a singleton type
representing a dynamic integer whose value equals to \(\stermm{i}\),
where \(\stermt{int}\) is a type constructor of sort
\(\sortm({\stint\Rightarrow\sttype})\). We use \(P\) for propositions,
i.e.~static terms of sort \(\sortt(bool)\). Given \(\stermm{P}\),
\(\typem({P \sguard \tau})\) is a guarded type. Intuitively, if a value
\(\dtermm{v}\) is assigned a guarded type \(\typem({P\sguard\tau})\),
then \(\dtermm{v}\) can be used only if the guard \(\stermm{P}\) is
satisfied. The whole type is a universally quantified, guarded, function
type that reads, for all (quantified) static integers \(\stermm{m}\) and
\(\stermm{n}\) where \(\stermm{n\neq 0}\) (guarded), we form a function
type where given two integers whose values are \(\stermm{m}\) and
\(\stermm{n}\), returns a integer whose value is \(\stermm{m/n}\) where
\(/\) should be interpreted as a static integer division with the result
rounded to the nearest integer. Importantly, division-by-zero will be an
\emph{type error}.

Thread pool \(\dpool\) is a collection of mappings \(\dtermm{t{:}e}\)
from thread id to closed expressions. Typing judgement is of the form
\(\Sigma;\stermm{\vv P};\Gamma;\Delta\vdash\dtermm{e}:\typem(\hat\tau)\)
where \(\Sigma\) is the sorting context, \(\stermm{\vv P}\) is a set of
propositions representing constraints, \(\Gamma\) is the non-linear
typing context, and \(\Delta\) is the linear typing context. We may
simply write \(\vdash\dtermm{e}:\typem(\hat\tau)\) if all contexts are
empty. Type equality is defined in terms of subtyping relations.
Type-checking is reduced into constraint-solving in some
constraint-domain.

Lastly, we mention that \(\ATS\) is sound. Please refer to
\cite{Xi:2003kl} for the proof of subject reduction and the progress
property.

\hypertarget{multirole-logic}{%
\section{Multirole Logic}\label{multirole-logic}}

Along the lines of
\cite{Abramsky:1993io,Abramsky:1994ez,Bellin:1994ua,Wadler:2012ua,Caires:2010gi}
that interpret cut reductions (i.e., cut-elimination steps) as
communications between \emph{two} parties in some session-typed process
calculi, we seek to find a logic that admits a cut rule combining
\emph{more than two} sequents as a foundation for \emph{multiparty}
session types. The notion of \emph{duality}, which is an inexplicit side
condition for the traditional cut rule, has to be generalized to account
for the \emph{coherence}/\emph{compatibility} of multiple sequents in a
cut, too. This led us to the discovery of multirole logic. Giving a full
account of MRL in this work is infeasible. We only present our insights
relevant to \(\Ldeppi\). \emph{In short, all the propositions in the
guarded types of Session APIs are directly influenced by the side
conditions of inference rules of MRL/LMRL.} Please refer to
\cite{Xi:2017wv} for details.

\begin{figure}
\begin{adjustbox}{varwidth=1.2\textwidth,max width=\textwidth}
\begin{gather*}
\tag{Two-sided}
\vcenter{\prftree[r]{\rulename{id}}
    {\ }
	{A \vdash A}}
\quad
\vcenter{\prftree[r]{\rulename{cut}}
	{\Gamma \vdash A, \Delta}
	{\Gamma', A \vdash \Delta'}
	{\Gamma,\Gamma' \vdash \Delta, \Delta'}}
\quad
\vcenter{\prftree[r]{\rulename{$\neg$L}}
	{\Gamma\vdash A, \Delta}
	{\Gamma, \neg A\vdash \Delta}}
\quad
\vcenter{\prftree[r]{\rulename{$\neg$R}}
	{\Gamma, A \vdash \Delta}
	{\Gamma \vdash \neg A, \Delta} }
\\
\tag{One-sided}
\vcenter{\prftree[r]{\rulename{id}}
	{\ }
	{\vdash A,\neg A} }
\quad
\vcenter{\prftree[r]{\rulename{cut}}
	{\vdash\Gamma,A}
	{\vdash\Delta,\neg A}
	{\vdash\Gamma,\Delta}}
\\
\tag{Many-sided}
\vcenter{\prftree[r]{\rulename{id-two-sided}}{\ }
    {\iform{A}{1}\vdash\iform{A}{0}}}
\quad 
\vcenter{\prftree[r]{\rulename{id-one-sided}}{\ }
    {\vdash\iform{A}{0},\iform{\neg A}{0}}}
\quad
\vcenter{\prftree[r]{\rulename{id-two-sided-on-one-side}}{\ }
    {\vdash\iform{A}{0},\iform{A}{1}}}
\end{gather*}
\end{adjustbox}
\caption{Intuitions of MRL}
\label{fig:mrlintuition}
\end{figure}

The intuition is best summarized in \cref{fig:mrlintuition} with
selected rules from two-sided, one-sided, and ``many-sided'' sequent
calculus for classical logic. The rules for the two-sided sequent
calculus is well-known to be symmetric. The \rulename{$\neg$L} and
\rulename{$\neg$R} rules provide a way to move a formula from one side
to the other while remembering how many times a formula has been moved.
Due to involutive negation, we have the equivalent one-sided
presentation, where formulas are identified up to de Morgan duality. One
possible explanation for this duality is to think of the availability of
two roles 0 and 1 s.t. the left side of \(\vdash\) plays role 1 while
the right side does role 0. Negation is still about changing
roles/sides.

With this explanation, we can write the \rulename{id} rule in the
following ways in \cref{fig:mrlintuition}. In \rulename{id-two-sided}
and \rulename{id-one-sided}, the subscript 0 and 1 denotes sides, and
\(\vdash\) both separate sides and denotes derivability. Note how
negation changes the side of \(A\) from 1 to 0 while remembering it has
been moved once. In \rulename{id-two-sided-on-one-side}, we still have
two sides, denoted by the subscripts 0 and 1, except that we write them
both on the right of \(\vdash\). In this case, \(\vdash\) no longer
separate sides. It merely is a meta-symbol denoting the derivability of
formulas on its right. Using the style of
\rulename{id-two-sided-on-one-side}, it seems entirely natural for us to
introduce more roles into classical logic like in this three-sided
sequent \(\vdash\iform{A}{0},\iform{A}{1},\iform{A}{2}\). This leads us
to multirole logic.

\cref{fig:mrl} presents the syntax and inference rules of (classical)
MRL. A formula, combined with a role set \(R\) (represented as a set of
integers) within which the formula should be interpreted, is an
\(i\)-formula. Note that MRL is parameterized by some underlying
\emph{full} set of roles \(\fullset\), potentially infinite. Negation is
generalized into \emph{endomorphism} \(f\) whose rules involves changing
roles \(R\), that corresponds to our intuition of changing sides.
Connectives \(\land\) and \(\lor\) are generalized into ultrafilters
\(\uf\), where \(\uf\) are interpreted based on roles, as illustrated in
rule \(\lor_1\), \(\lor_2\), and \(\land\) where the rules are named
after the connectives' intended meanings. Similarly, quantifiers
\(\forall\) and \(\exists\) are generalized into ultrafilters, this time
written as \(\forallu\) to avoid conflicts.

\def\mrlwk(#1,#2){
\prftree[r]{\rulename{w}}
	{\vdash\Gamma}
	{\vdash\Gamma,\iform{#1}{#2}}
}
\def\mrlctr(#1,#2){
\prftree[r]{\rulename{c}}
	{\vdash\Gamma,\iform{#1}{#2},\iform{#1}{#2}}
	{\vdash\Gamma,\iform{#1}{#2}}
}

\def\mrlid(#1){
\prftree[r]{\rulename{id}}
	{R_1\uplus\dotsb\uplus R_n=\fullset}
	{\vdash\iform{#1}{R_1},\dotsc,\iform{#1}{R_n}}
}

\def\mrlneg(#1,#2,#3){
\prftree[r]{$\neg$}
	{\vdash\Gamma,\iform{#1}{\inv{#2}(#3)}}
	{\vdash\Gamma,\iform{#2(#1)}{#3}}
}

\def\mrlconj(#1,#2,#3){
\prftree[r]{$\land$}
	{#3\in\uf}
	{\vdash\Gamma,\iform{#1}{#3}}
	{\vdash\Gamma,\iform{#2}{#3}}
	{\vdash\Gamma,\iform{#1\conju#2}{#3}}
}

\def\mrldisjl(#1,#2,#3){
\prftree[r]{$\lor_1$}
	{#3\notin\uf}
	{\vdash\Gamma,\iform{#1}{#3}}
	{\vdash\Gamma,\iform{#1\conju#2}{#3}}
}

\def\mrldisjr(#1,#2,#3){
\prftree[r]{$\lor_2$}
	{#3\notin\uf}
	{\vdash\Gamma,\iform{#2}{#3}}
	{\vdash\Gamma,\iform{#1\conju#2}{#3}}
}

\def\mrlexists(#1,#2){
\prftree[r]{$\exists$}
	{#2\notin\uf}
	{\vdash\Gamma,\iform{#1\subst{x}{t}}{#2}}
	{\vdash\Gamma,\iform{\forallu x.#1}{#2}}
}

\def\mrlforall(#1,#2){
\prftree[r]{$\forall$}
	{#2\in\uf}
	{x\notin\Gamma}
	{\vdash\Gamma,\iform{#1}{#2}}
	{\vdash\Gamma,\iform{\forallu x.#1}{#2}}
}

\begin{figure}
\begin{adjustbox}{varwidth=1\textwidth,max width=\textwidth}
\[
\begin{array}{rlcl}
\text{Formulas}    & && A,B \defeq a \mid f(A) \mid A \conju B \mid \forallu x.A \\
\text{$i$-formulas}& && \iform{A}{R}
\end{array}
\]

\begin{gather*}
\mrlid(a)\quad
\mrlwk(A,R)\quad\mrlctr(A,R)\quad
\mrlneg(A,f,R) \\
\mrldisjl(A,B,R) \quad \mrldisjr(A,B,R) \quad
\mrlconj(A,B,R) \\
\mrlexists(A,R) \quad
\mrlforall(A,R)
\end{gather*}
\end{adjustbox}
\caption{Multirole Logic}
\label{fig:mrl}
\end{figure}

Among all the results, the most important ones are the admissibility of
the followings. Note that \rulename{2-cut-residual} is by far the most
general form of cut, which cannot be formulated in (traditional)
classical logic. In this rule, a cut may not be a complete cut in that
it leaves a residual \(i\)-formula whose roles are the intersection of
the original cut formulas. In LK/CLL with only two roles available, the
intersection is \emph{always empty}, thus making this the same as a
regular cut. Multirole reveals this subtlety hidden in plain sight!

\begin{lemma}[Admissibility of Cut]
The following rules are admissible.

\begin{adjustbox}{minipage=1.2\textwidth,max width=1\textwidth}
\begin{gather*}
\prftree[r]{\rulename{1-cut}}
    {\vdash\Gamma,\iform{A}{\varnothing}}
    {\vdash\Gamma}
\quad
\prftree[r]{\rulename{2-cut-residual}}
    {\overline{R}_1\cap\overline{R}_2=\varnothing}
    {\vdash\Gamma,\iform{A}{R_1}}
    {\vdash\Delta,\iform{A}{R_2}}
    {\vdash\Gamma,\Delta,\iform{A}{R_1\cap R_2}}
\\
\prftree[r]{\rulename{mp-cut}}
    {\overline{R}_1\uplus\dotsb\uplus\overline{R}_n=\fullset}
    {\vdash\Gamma_1,\iform{A}{R_1}}
    {\cdots}
    {\vdash\Gamma_n,\iform{A}{R_n}}
    {\vdash\Gamma_1,\dotsc,\Gamma_n}
\end{gather*}
\end{adjustbox}
\end{lemma}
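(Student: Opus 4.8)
The plan is to prove \rulename{1-cut} and \rulename{2-cut-residual} as the two core results, and then obtain \rulename{mp-cut} by iterating them. These three are interderivable in a way that lets me isolate the genuinely new phenomenon (the residual) in a single binary rule.

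First I would dispatch \rulename{1-cut} — eliminating an $i$-formula $\iform{A}{\varnothing}$ whose role set is empty — by a straightforward induction on the height of the derivation of $\vdash\Gamma,\iform{A}{\varnothing}$. The point is that no ``harmful'' rule can ever have made $\iform{A}{\varnothing}$ principal: since an ultrafilter never contains the empty set, $\varnothing\notin\uf$, so the $\land$ and $\forall$ rules (which require $R\in\uf$) could not have introduced it, while \rulename{id} remains a valid instance after the empty-role component is dropped, because the remaining role sets still partition $\fullset$. In every other case the marked $i$-formula is a side formula and the elimination commutes upward by the induction hypothesis; when it was introduced by \rulename{w} we simply delete that weakening.

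The real work is \rulename{2-cut-residual}, which I would establish by the standard double induction: a primary induction on the number of connectives of the cut formula $A$ and a secondary induction on the sum of the heights of the two premise derivations. The commutative cases — where $\iform{A}{R_i}$ is a side formula in at least one premise, or was introduced by \rulename{w}/\rulename{c} — permute the cut past the last rule and invoke the secondary hypothesis. The principal cases are the crux and are driven entirely by the ultrafilter axioms. The side condition $\overline{R}_1\cap\overline{R}_2=\varnothing$ is equivalent to $R_1\cup R_2=\fullset$, so by primeness of $\uf$ at least one of $R_1,R_2$ lies in $\uf$; hence at least one premise introduced $\iform{A\conju B}{R_i}$ by the $\land$ rule and supplies both $\iform{A}{R_i}$ and $\iform{B}{R_i}$. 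To reconstruct the residual $\iform{A\conju B}{R_1\cap R_2}$ I case on whether $R_1\cap R_2\in\uf$: by closure under intersection this holds iff both $R_1,R_2\in\uf$, in which case both premises used $\land$, I cut on the strictly smaller subformulas $A$ and $B$ separately (primary hypothesis) to obtain $\iform{A}{R_1\cap R_2}$ and $\iform{B}{R_1\cap R_2}$, and reassemble with $\land$; otherwise exactly one $R_i\notin\uf$, so that premise used $\lor_1$ or $\lor_2$ and selects one subformula, which I cut against the matching subformula drawn from the $\land$-premise and rebuild the residual with the same $\lor$ rule. The $\neg$-case pushes the cut through the endomorphism, using that $\inv{f}$ is a Boolean automorphism and so commutes with complement and intersection and preserves the side condition; the $\forallu$-case mirrors the $\conju$-case, with the single $\exists$-premise providing a witness $t$ at which I instantiate the $\forall$-premises (via an auxiliary substitution lemma that preserves derivability and leaves the connective count unchanged) before cutting on $\iform{A\subst{x}{t}}{\cdot}$.

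Finally I would derive \rulename{mp-cut} by telescoping. Applying \rulename{2-cut-residual} to the first two premises is legitimate because $\overline{R}_1$ and $\overline{R}_2$ are disjoint blocks of the partition $\overline{R}_1\uplus\dotsb\uplus\overline{R}_n=\fullset$; the result carries the residual $\iform{A}{R_1\cap R_2}$, whose complement is $\overline{R}_1\uplus\overline{R}_2$. Cutting successively against $\iform{A}{R_3},\dotsc,\iform{A}{R_n}$ stays legal for the same disjointness reason, and after the last step the residual has role set $R_1\cap\dotsb\cap R_n$, whose complement is $\biguplus_k\overline{R}_k=\fullset$; hence the residual is $\iform{A}{\varnothing}$, which \rulename{1-cut} removes to yield $\vdash\Gamma_1,\dotsc,\Gamma_n$. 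The main obstacle I anticipate is exactly the principal case of \rulename{2-cut-residual}: correctly matching two premises that may have been built by different rules ($\land$ on one side, $\lor_i$ on the other) and choosing the right rule to reconstruct the residual, all of which hinges on the interplay of primeness, intersection-closure, and the finite-partition property of the underlying ultrafilter — precisely the subtlety that is invisible when only two roles are present.
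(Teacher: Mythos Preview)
The paper does not prove this lemma; it is stated without proof as a summary of results from the companion paper \cite{Xi:2017wv} on multirole logic, and the surrounding section explicitly says that a full account of MRL is infeasible here and refers the reader elsewhere for details. So there is no in-paper proof to compare against.

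Your strategy --- prove \rulename{1-cut} and \rulename{2-cut-residual} directly and then telescope them to obtain \rulename{mp-cut} --- is the natural one and matches how Gentzen-style cut elimination is normally organized. The ultrafilter reasoning in your principal case for $\conju$ is the right idea: primeness of $\uf$ together with $R_1\cup R_2=\fullset$ forces at least one side to have used the $\land$ rule, and closure under intersection determines which rule rebuilds the residual at $R_1\cap R_2$. The telescoping derivation of \rulename{mp-cut} is correct as stated.

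One omission in your \rulename{1-cut} sketch: you note that $\land$ and $\forall$ cannot have $\iform{A}{\varnothing}$ principal, but $\lor_1$, $\lor_2$, $\exists$, and $\neg$ \emph{can}, since their side condition is $R\notin\uf$ and $\varnothing\notin\uf$. These are not ``side formula'' cases as you claim. They are nonetheless harmless --- the premise again carries an $i$-formula with role set $\varnothing$ (for $\neg$ this uses that a Boolean automorphism fixes $\varnothing$), so your height induction applies directly --- but the case analysis as written is incomplete. With that patched, the plan is sound and is essentially the argument one would expect the cited reference to contain.
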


We omit the discussion of linear multirole logic (LMRL). In MRL,
formulas (including connectives) are ``global,'' \(i\)-formulas are
``local,'' and inference rules interpret connectives locally. As
mentioned in the beginning, session types are global, endpoint types are
local, and session APIs interpret global session types locally. The
design of \(\Ldeppi\) comes from this insights of MRL/LMRL.

\hypertarget{multiparty-dependent-session-types}{%
\section{Multiparty Dependent Session
Types}\label{multiparty-dependent-session-types}}

In this section, we first introduce the syntax and semantics of
\(\Ldeppi\), mention some extensions and examples, then prove its
soundness via deadlock-freeness reducibility.

\ptitle{Syntax and Static Semantics} The syntax is listed in
\cref{fig:mpdepsyntax}. We add \(\ststype\) to the statics as a new base
sorts. Static terms of sort \(\ststype\) are session types. We add
\(\stset\) as a new base sort for static integer sets to represent the
roles of a party. We use \(\stermm{\varnothing}\) for empty set and
\(\stermm{\fullset}\) for full set. We assume the existence of static
constant functions for basic set operations, e.g. \(\stermm{\uplus}\)
for disjoint union, \(\overline{\,\cdot\,}\) for complement w.r.t.
\(\stermm{\fullset}\), and \(\setminus\) for set minus.

\begin{figure}
\begin{adjustbox}{varwidth=1.2\textwidth,max width=1\textwidth}
Additional Syntax of Statics
\[
\begin{array}{rrcl}
\text{base sorts} \quad & \sortm(b)   & \defeq & \cdots \mid \sortt(set) \mid \ststype   \\
\text{roles}      \quad & \stermm{r}  & \defeq & \cdots \mid \stermm{-1} \mid \stermm{0} \mid \stermm{1}  \mid \cdots \\
\text{role sets}  \quad & \stermt{rs} & \defeq & \stermm{\varnothing} \mid \stermm{\{r_1,\dotsc,r_n\}} \mid \stermm{\rs_1\uplus\rs_2} \mid \stermm{\rs_1\cup\rs_2} \mid \stermm{\rs_1\cap\rs_2} \mid \cdots \\
\text{session types}     \quad & \stypem(\pi)                & \defeq & \stypem({\nil(r)}) \mid \stypem({\msg(r,\tau)\seqs\pi})  \mid \stypem({\msg(r_1,r_2,\hat\tau)\seqs\pi}) \mid \stypem({\quan(r,\lambda a{:}\sigma.\pi)})\\

\text{linear base types} \quad & \typem({\hat\delta(\vv s)}) & \defeq & \cdots             \mid \chan{\rs}{\pi} \\
\text{dynamic constant resources} \quad & \dcr & \defeq & \cdots \mid \dtermm{c^{\rs}} \\
\text{signature} \quad & \sig & \defeq & \cdots \mid \sig, \dtermm{c} : \stypem(\pi)

\end{array}
\]

Additional Signature of Static Constants
\[
\begin{array}{rclrcl}
\typet(chan)         & : & \sortm({(\stset,\ststype)\Rightarrow\stvtype})             \\
\nil                 & : & \sortm({(\stint) \Rightarrow \ststype})                           &
\msg                 & : & \sortm({(\stint,\sttype,\ststype)\Rightarrow\ststype})           \\
\msg                 & : & \sortm({(\stint,\stint,\stvtype,\ststype)\Rightarrow\ststype})    &
\quan                & : & \sortm({(\stint,\stsigma\rightarrow\ststype)\Rightarrow\ststype}) 
\end{array}
\]

Additional Typings
\begin{gather*}
\prftree[r]{\rulename{sig-chan}}
    {\sig,\dtermm{c}:\stypem(\pi)\vDash\dtermm{c^{\rs}}:\chan{\rs}{\pi}}
\end{gather*}

Additional Signature of Dynamic Constants\\
\end{adjustbox}

\begin{adjustbox}{varwidth=1.5\textwidth,max width=\textwidth}
\[
\begin{array}{rcl}
\dfork   &:& \typem({\forall\rs_1,\rs_2{:}\stsetc.
                     \forall\pi{:}\ststypec.
                     (\rs_1\uplus\rs_2=\fullset)\sguard}) 
             \typem({(\chan{\rs_1}{\pi}\multimap\sunit)
                     \Rightarrow\chan{\rs_2}{\pi}}) \\
\dcut    &:& \typem({\forall\rs_1,\rs_2{:}\stsetc.
                     \forall\pi{:}\ststypec.
                     (\rs_1\cup\rs_2=\fullset)\sguard}) 
             \typem({(\chan{\rs_1}{\pi}),\chan{\rs_2}{\pi})
                     \Rightarrow\chan{\rs_1\cap\rs_2}{\pi}}) \\
\delim   &:& \typem({
                     \forall\pi{:}\ststypec.})
             \typem({\chan{\varnothing}{\pi}
                     \Rightarrow\sunit}) \\   
\dsplit  &:& \typem({\forall\rs_1,\rs_2{:}\stsetc.
                     \forall\pi{:}\ststypec.
                     (\rs_1\cap\rs_2=\varnothing)\sguard}) 
             \typem({(\chan{\rs_1\uplus\rs_2}{\pi},\chan{\rs_1}{\pi}\multimap\sunit)
                     \Rightarrow\chan{\rs_2}{\pi}}) \\
\dbsend  &:& \typem({\forall\rs{:}\stsetc.
                     \forall r{:}\stintc.
                     \forall \pi{:}\ststypec.
                     \forall \tau{:}\sttypec.
                     (r\in\rs)\sguard}) 
             \typem({(\chan{\rs}{\msg(r,\tau)\seqs\pi},\tau)
                     \Rightarrow\chan{\rs}{\pi}}) \\
\dbrecv  &:& \typem({\forall\rs{:}\stsetc.
                     \forall r{:}\stintc.
                     \forall\pi{:}\ststypec.
                     \forall\tau{:}\sttypec.
                     (r\notin\rs)\sguard}) 
             \typem({\chan{\rs}{\msg(r,\tau)\seqs\pi}
                     \Rightarrow\chan{\rs}{\pi}\otimes\tau}) \\
\dsend   &:& \typem({\forall\rs{:}\stsetc.
                     \forall r_1,r_2{:}\stintc.
                     \forall \pi{:}\ststypec.
                     \forall \hat\tau{:}\stvtypec. }) \typem({
                     (r_1\in\rs\land r_2\notin\rs)\sguard}) 
             \typem({(\chan{\rs}{\msg(r_1,r_2,\hat\tau)\seqs\pi},\hat\tau)
                     \Rightarrow\chan{\rs}{\pi}}) \\
\drecv   &:& \typem({\forall\rs{:}\stsetc.
                     \forall r_1,r_2{:}\stintc.
                     \forall \pi{:}\ststypec.
                     \forall \hat\tau{:}\stvtypec. })  \typem({
                     (r_1\notin\rs\land r_2\in\rs)\sguard}) 
             \typem({\chan{\rs}{\msg(r_1,r_2,\hat\tau)\seqs\pi}
                     \Rightarrow\chan{\rs}{\pi}\tensor\hat\tau}) \\
\dskip   &:& \typem({\forall\rs{:}\stsetc.
                     \forall r_1,r_2{:}\stintc.
                     \forall \pi{:}\ststypec.
                     \forall \hat\tau{:}\stvtypec. })  \typem({
                     ((r_1,r_2\in\rs)\lor (r_1,r_2\notin\rs))\sguard}) 
             \typem({\chan{\rs}{\msg(r_1,r_2,\hat\tau)\seqs\pi}
                     \Rightarrow\chan{\rs}{\pi}\tensor\hat\tau}) \\
\dclose  &:& \typem({\forall\rs{:}\stsetc.\forall r{:}\stintc.(r\in\rs)\sguard\chan{\rs}{\nil(r)}\Rightarrow\sunit}) \\
\dwait   &:& \typem({\forall\rs{:}\stsetc.\forall r{:}\stintc.(r\notin\rs)\sguard\chan{\rs}{\nil(r)}\Rightarrow\sunit}) \\
\dunify  &:& \typem({\forall\rs{:}\stsetc.
                     \forall r{:}\stintc.
                     \forall f{:}\sigma\rightarrow\ststypec.
                     (r\in\rs)\sguard}) 
             \typem({\chan{\rs}{\quan(r,f)}
                     \Rightarrow\forall a{:}\sigma.\chan{\rs}{f(a)}}) \\
\dexify  &:& \typem({\forall\rs{:}\stsetc.
                     \forall r{:}\stintc.
                     \forall f{:}\sigma\rightarrow\ststypec.
                     (r\notin\rs)\sguard}) 
             \typem({\chan{\rs}{\quan(r,f)}
                     \Rightarrow\exists a{:}\sigma.\chan{\rs}{f(a)}}) \\
\end{array}
\]
\end{adjustbox}
\caption{Additional Syntax and Typings of $\Ldeppi$}
\label{fig:mpdepsyntax}
\end{figure}

We use \(\spi\) for session types. We add session type constructors
\(\nil\), \(\msg\), and \(\quan\). Their sc-sorts are also given in the
signature \(\sig\). Again, \(\spi\) describe protocols \emph{globally}.
\(\stypem({\nil(r)})\) means party \(\stermm{r}\) is to close the
channel, while other parties will wait.
\(\stypem({\msg(r, \tau)\seqs\pi})\) means party \(\stermm{r}\) is to
\emph{broadcast} a value of non-linear type \(\typem(\tau)\), then
proceed according to \(\stypem(\pi)\), while others are to receive a
value of that type, then proceed according to \(\stypem(\pi)\). We
overload the name \(\msg\) and use
\(\stypem({\msg(r_1,r_2,\hat\tau)\seqs\pi})\) for sending
\emph{point-to-point linear messages}, from \(\stermm{r_1}\) to
\(\stermm{r_2}\), while others will skip the message and continue the
session following \(\spi\). \(\quan\) has sort scheme
\(\sortm({(\stint,\sigma\rightarrow\ststype)\Rightarrow\ststype})\). It
takes a role, and a \emph{static function} denoting the binder of the
quantification from sort \(\sortm(\sigma)\) to a session type, to
construct a session type that represents a \emph{global quantifier} in
the session type. The first argument denotes a role of a party who will
interpret the quantification as universal, while the others will
interpret the quantification as existential as in \cref{eg:array}.

Linear base type constructor \(\stermt{chan}\) are for endpoint types.
Given a set \(\rs\) representing the roles played by this endpoint, and
a session type \(\spi\), \(\chan{\rs}{\pi}\) is the linear type we
assign to the endpoint of roles \(\rs\) in the session \(\spi\). In LMRL
syntax, \(\chan{\rs}{\pi}\) is an analogy to \(\iform{\spi}{\rs}\) with
a formula \(\pi\) and a role set \(\rs\). Note that we inexplicit assume
some underlying full set \(\stermm{\fullset}\) and omit it for brevity.

Endpoints are resources. We use \(\dtermm{c}\) for channels, and
\(\dtermm{c^\rs}\) for an endpoint of \(\dtermm{c}\) with roles \(\rs\).
We classify \(\dtermm{c^\rs}\) as \(\dcr\). To facilitate presentation,
we define the followings.

\begin{definition}
Given a multiset of resources \(R\), i.e.~the result of \(\rho(\cdot)\)
on some term, we define the following functions.
\label[def]{def:chaneprole} \begin{align*}
&\text{All channels in $R$} &
\channels(R)          &\defeq \{c   \mid  c^\rs\in R \} \tag*{set}      \\
&\text{All endpoints in $R$} &
\endpoints(R)         &\defeq \{c^\rs \mid c^\rs\in R \} \tag*{multiset} \\
&\text{All endpoints of $\dtermm{c}$ in $R$} &
\endpoints(R,c)       &\defeq \{c_0^\rs \mid c_0^\rs\in R, c_0=c \} \tag*{multiset}  
\end{align*}
\end{definition}

We also write \(\endpoints(c)\) to mean the set of all endpoints of
channel \(\dtermm{c}\) where the disjoint union of their roles is the
underlying full set for this session. Note that if \(\pconsistent(R)\),
then all of these functions result in \emph{sets}. From now on we simply
assume that they are sets.

\begin{definition}[Consistency of Resources]
\label[def]{def:mpdepconsistent} Given a multiset of resources \(R\), we
define the consistency of \(R\) as,
\begin{itemize}
\tightlist
\item
  \(\pconsistent(\varnothing)\)
\item
  \(\pconsistent(R\uplus\endpoints(\dtermm{c}))\) iff
  \(\pconsistent(R)\), \(\endpoints(R,c)=\varnothing\), and roles of all
  endpoints of \(\dtermm{c}\) forms a partition of \(\fullset\).
\end{itemize}
\end{definition}

To assign linear types to endpoints, we add a new constant typing rule
\rulename{sig-chan} that says, if \(\dtermm{c}\) has session type
\(\spi\) in the signature, then its endpoint \(\dtermm{c^\rs}\) has
linear type \(\chan{\rs}{\pi}\). Note that when a protocol advances, the
signature will change accordingly. All other aspects of the static
semantics are the same as in \(\ATS\).

\ptitle{Session APIs} Session APIs provide local interpretations of
global session types. The dc-types assigned to them
(\cref{fig:mpdepsyntax}) ensure correct and coherent local
interpretations.

Session type for broadcasting, \(\stypem({\msg(r,\tau)\seqs\pi})\), is
interpreted by a pair \(\dbsend\) and \(\dbrecv\). The dc-type for
\(\dbsend\) says, given roles \(\rs\), a role \(\stermm{r}\), a session
type \(\spi\), and a type \(\typem(\tau)\), if \(\stermm{r}\in\rs\),
then given an endpoint of roles \(\rs\) whose type is
\(\chan{\rs}{\msg(r,\tau)\seqs\pi}\), and a message of type
\(\typem(\tau)\), \(\dbsend\) will broadcast the message, and return the
endpoint indexed by the \emph{continuation} of the protocol, which is
\(\spi\). This is shown in reduction rule \rulename{pr-bmsg}. Note that
the \(\stypem({\msg(r,\tau)})\) part is \emph{consumed}. The type system
mandates that the head of the protocol must be
\(\stypem({\msg(r,\tau)})\) and that this \(\stermm{r}\) in the protocl
must belongs to the roles of the endpoint, \(\stermm{r}\in\rs\). Only
when this proposition of the guraded type can be proven true, that this
function invocation is well-typed. Correspondingly, only when
\(\stermm{r}\notin\rs\) that one can invoke \(\dbrecv\) to receive such
broadcasting messages from party \(\stermm{r}\).

It can be proven that given a role \(\stermm{r}\) as in
\(\stypem({\msg(r,\tau)\seqs\spi})\), given a consistent collection of
endpoints of \(\dtermm{c}\) whose roles should form a partition of some
full set, there will be \emph{exactly one} endpoint whose roles contains
\(\stermm{r}\). Therefore, in a well-typed pool, there can be only one
thread invoking \(\dbsend\) at this point, while all other threads
connected by this channel can only invoke \(\dbrecv\). Also, since only
\(\dbsend\) and \(\dbrecv\) can deal with protocols starting with
\(\stypem({\msg(r,\tau)})\), any other session APIs invoked for this
protocol will be ill-typed. Since endpoint types are linear, one can
only invoke these functions once, and then must proceed to the
continuation of the protocol, denoted by \(\stypem(\seqs\,\spi)\). All
combined, each endpoint is guaranteed to follow the protocol strictly,
and all endpoints are guaranteed to be coherent/compatible with others
in the same session.

Session type \(\stypem({\msg(r_1,r_2,\hat\tau)\seqs\pi})\) is
interpreted by \(\dsend\), \(\drecv\), and \(\dskip\) for sending (at
party \(\stermm{r_1}\)), receiving (at party \(\stermm{r_2}\)), or
simply ignoring (at all others), as in \rulename{pr-msg}. Self-looping
messages are disallowed. \(\dskip\) is a \emph{proof function}, meaning
it simply changes the type of the endpoint and has no runtime effects.
It can be eliminated safely after type-checking. Note that since there
is only one sending party and one receiving party, it is now possible to
exchange \emph{linear} data. This means that endpoints as linear data
can be exchanged, resulting in \emph{higher-order} sessions. This is one
of the factors that makes the proof of deadlock-freeness much more
involved. Session type \(\stypem({\nil(r)})\) is interpreted by
\(\dclose\) at party \(\stermm{r}\) and \(\dwait\) at all other parties,
as in \rulename{pr-end}.

Quantification in the session types has never been treated in the
\(\lambda\)-calculus setting before. This work is the first of its kind
to support quantification and polymorphism in the session types.
\(\stypem({\quan(r,\lambda a{:}\sigma.\pi)})\) is a quantifier that
needs interpretation, just as \(\forallu\) in LMRL. It is interpreted by
\(\dunify\) at party \(\stermm{r}\) as universal quantification and
\(\dexify\) at all other parties as existential quantification, shown in
\rulename{pr-quan} and \cref{eg:array}. They are proof functions as
well. Note that if we quantify over session types, we obtain
\emph{polymorphic} sessions. See \cref{eg:poly}.

Given a thread function, \(\dfork\) creates a new thread and connects to
it with a fresh channel \(\dtermm{c}\). The dc-type of \(\dfork\)
mandates that \(\stermm{\rs_1}\) and \(\stermm{\rs_2}\) form a partition
of the full set, ensuring the consistency of channel endpoints.
\(\dcut\) is for connecting two channels of the same session type, i.e.,
forwarding. It loosely corresponds to the \rulename{2-cut-residual} rule
of LMRL. Given one endpoint from each of the two channels, \(\dcut\)
connects the two channels into a single channel and returns a
\emph{residual} endpoint of roles \(\stermm{\rs_1\cap\rs_2}\) to the
caller, also connected in this channel. The admissibility of cut in LMRL
means connecting \emph{multiple} channels of the same type via
forwarding is equivalent to establishing just a \emph{single} channel
connecting all parties from the very beginning. \(\delim\) is for
eliminating an endpoint with empty role sets. \(\dsplit\) is for
splitting an endpoint into two disjoint endpoints, in separate threads.

\ptitle{Dynamic Semantics} Term reductions model normalizations in a
single thread. In a thread pool \(\dpool\), we use pool reductions. Note
that our formulation is for synchronous communications for simplicity,
although our implementations fully support asynchronous communications.
We define partial ad-hoc redexes in \cref{fig:mpdepred}. Only matching
partial redexes can reduce according to some pool reduction rules.

\begin{definition}[Matching Partial Redexes]
\label[def]{def:mpdepmatch} For any channel \(\dtermm{c}\),
\begin{itemize}
\tightlist
\item
  \(\pmatch(\{\dtermm{\dbsend(c^{\rs_1}, v)}, \dtermm{\dbrecv(c^{\rs_2})},\dotsc,\dtermm{\dbrecv(c^{\rs_n})}\})\)
\item
  \(\pmatch(\{\dtermm{\dsend(c^{\rs_1}, v)}, \dtermm{\drecv(c^{\rs_2})},\dtermm{\dskip(c^{\rs_3})},\dotsc,\dtermm{\dskip(c^{\rs_n})}\})\)
\item
  \(\pmatch(\{\dtermm{\dclose(c^{\rs_1}, v)}, \dtermm{\dwait(c^{\rs_2})},\dotsc,\dtermm{\dwait(c^{\rs_n})}\})\)
\item
  \(\pmatch(\{\dtermm{\dunify(c^{\rs_1}, v)}, \dtermm{\dexify(c^{\rs_2})},\dotsc,\dtermm{\dexify(c^{\rs_n})}\})\)
\end{itemize}
\noindent where
\(\endpoints({\dtermm{c}})=\{\dtermm{c^{\rs_1}},\dotsc,\dtermm{c^{\rs_n}}\}\).
We write \(\pmatch(\{\dtermm{e_1},\dotsc,\dtermm{e_n}\})\) for
\(\pmatch(\dtermm{e_1'},\dotsc,\dtermm{e_n'})\) if
\(\dtermm{e_1}=\dtermm{E[e_1']},\dotsc,\dtermm{e_n}=\dtermm{E[e_n']}\).
\end{definition}

\begin{figure}
\begin{adjustbox}{varwidth=1.5\textwidth,max width=\textwidth}
Redexes
\[
\begin{array}{rcl}
\text{partial (ad-hoc) redex} & \defeq & \dtermm{\dsend(c^{\rs},v)}  \mid \dtermm{\drecv(c^{\rs})} \mid \dtermm{\dskip(c^{\rs})}  \mid \dtermm{\dbsend(c^{\rs},v)} \mid \dtermm{\dbrecv(c^{\rs})} \mid \dtermm{\dclose(c^{\rs})} \mid \dtermm{\dwait(c^{\rs})} \mid  \dtermm{\dunify(c^{\rs})}      \mid \dtermm{\dexify(c^{\rs})}
\end{array}
\]

Pool Reductions
\[
\begin{array}{lrcl}
\text{(\rulename{pr-fork})}  & \dtermm{\dpool,t{:}E[\dfork(\dlam x.e)]}                                                                        & \creduce & \dtermm{\dpool,t{:}E[c^{\rs_2}], t'{:}e\map(x,c^{\rs_1})}                                               \\
\text{(\rulename{pr-cut})}   & \dtermm{\dpool,t{:}E[\dcut(c_1^{\rs_1}, c_2^{\rs_2})]}                                                          & \creduce & \dtermm{\dpool\map({c_1,c_2},{c,c}), t{:}E[c^{\rs_1\cap\rs_2}]}                                         \\
\text{(\rulename{pr-elim})}  & \dtermm{\dpool,t{:}E[\delim(c^{\varnothing})]}                                                                          & \creduce & \dtermm{\dpool,t{:}E[\dunit]}                                                                           \\
\text{(\rulename{pr-split})} & \dtermm{\dpool,t{:}E[\dsplit(c^{\rs_1\uplus\rs_2},\dlam x.e)]}                                                  & \creduce & \dtermm{\dpool,t{:}E[c^{\rs_2}], t'{:}e\map(x,c^{\rs_1} )}                                              \\
\text{(\rulename{pr-bmsg})}  & \dtermm{\dpool,t_1{:}E[\dbsend(c^{\rs_1}, v)], t_2{:}E[\dbrecv(c^{\rs_2})],\dotsc, t_n{:}E[\dbrecv(c^{\rs_n})]} & \creduce & \dtermm{\dpool,t_1{:}E[c^{\rs_1}], t_2{:}E[\dpair(c^{\rs_2}, v)],\dotsc, t_n{:}E[\dpair(c^{\rs_n}, v)]} \\
\text{(\rulename{pr-end})}   & \dtermm{\dpool,t_1{:}E[\dclose(c^{\rs_1})], t_2{:}E[\dwait(c^{\rs_2})],\dotsc, t_n{:}E[\dwait(c^{\rs_n})]}      & \creduce & \dtermm{\dpool,t_1{:}E[\dunit],t_2{:}E[\dunit],\dotsc, t_n{:}E[\dunit]}         \\                        
\text{(\rulename{pr-quan})}  & \dtermm{\dpool,t_1{:}E[\dunify(c^{\rs_1})], t_2{:}E[\dexify(c^{\rs_2})],\dotsc, t_n{:}E[\dexify(c^{\rs_n})]}    & \creduce & \dtermm{\dpool,t_1{:}E[c^{\rs_1}],t_2{:}E[c^{\rs_2}],\dotsc, t_n{:}E[c^{\rs_n}]} \\[1.5\jot]
\text{(\rulename{pr-msg})}   & \stackbox[r][c]{$\dtermm{\dpool,t_1{:}E[\dsend(c^{\rs_1}, v)], t_2{:}E[\drecv(c^{\rs_2})],}$\\ $\dtermm{t_3{:}E[\dskip(c^{\rs_3})], \dotsc, t_n{:}E[\dskip(c^{\rs_n})]}$} & \creduce &
                               \stackbox[l][c]{$\dtermm{\dpool,t_1{:}E[c^{\rs_1}],t_2{:}E[\dpair(c^{\rs_2}, v)],}$\\ $\dtermm{t_3{:}E[c^{\rs_3}],\dotsc,t_n{:}E[c^{\rs_n}]}$ } 
\end{array}
\]


\begin{center}
\small{For the last four pool reduction rules, we assume $\endpoints(\dtermm{c})=\{\dtermm{c^{\rs_1}},\dotsc,\dtermm{c^{\rs_n}}\}$.}
\end{center}

Pool Equivalences
\[
\begin{array}{lrcll}
\text{(\rulename{pe-cut})}   & \dtermm{\dpool,t{:}E[\dcut(x, y)]}    & \equiv & \dtermm{\dpool,t{:}E[\dcut(y, x)]}    
\end{array}
\]
\end{adjustbox}
\caption{Additional Reductions in $\Ldeppi$}
\label{fig:mpdepred}
\end{figure}

\ptitle{Extensions} Some common features are intentionally left out for
brevity. We mention some very briefly. Branching in the session types
can be supported by adding a new session type constructor,
\(\stypem({\branch(r,\pi_1,\pi_2)})\) and a pair of cooresponding
session APIs \(\doffer\) and \(\dchoose\). Party \(\stermm{r}\) will be
offering the choice, and all other parties need to choose. Recursive
sessions can be supported by adding
\(\stypem({\fix(\lambda a{:}\ststypec.\pi)})\). A session API
\(\drecurse\) can be added to unroll \(\chan{\rs}{\fix(f)}\) into
\(\chan{\rs}{f(\fix(f))}\) for any \(\rs\) and \(\stermm{f}\).
\(\dfork\) only forms sessions locally. One can introduce
\(\stypem({\init(r,\pi)})\) for forming sessions distributedly. A pair
of APIs, \(\daccept\) (at \(\stermm{r}\)) and \(\drequest\) (at all
others) can be provided. Note that after \(\drequest\), a new thread is
created and the new endpoint is passed to the thread. \(\drequest\)
\emph{can not} return the endpoint to the current thread as it breaks
relaxation, thus df-reducibility, and will cause a loop. See
\cite{Wu:2017wd} for more possible constructs.

\begin{example}
\label[eg]{eg:poly} We model a cloud service. When the provider \(P\)
gives the server \(S\) a function that serves \emph{any} protocol
\(\stypem(x)\) \emph{once}, the server \(S\) will \emph{repeatedly}
serving that protocol to client \(C\). As a syntactical sugar, we will
write \(\dtermm{e_1;e_2}\) as a shorthand for
\(\dtermm{\dsnd(\dpair(e_1,e_2))}\). We also write
\(\dtermm{\dlet{x}{e_1}{e_2}}\) for
\(\dtermm{\dapp(\dlam x.e_2, e_1)}\). The session type is

\begin{adjustbox}{varwidth=1.1\textwidth,max width=1\textwidth}
$$
\stypem({\quan(P,\lambda x{:}\ststypec.\msg(P,S,\chanc{S}{x}\rightarrow\sunitc)\seqs\fix(\lambda y{:}\ststypec.\msg(C,S,\chanc{S}{x})\seqs y)})
$$
\end{adjustbox}

~

\noindent This is a \emph{polymorphic}, \emph{higher-order} session
type, that involves both a 3-party session among \(S\)/\(P\)/\(C\), and
a 2-party session (of session type \(\stypem(x)\)) between \(S\)/\(C\).
The program of \(S\) can be implemented below. The annotations on the
right denote the endpoint's type \emph{after} the invocation of the
session API on that line. \break

\def\dloop{\dtermt{loop}}
\begin{adjustbox}{varwidth=1.4\textwidth,max width=1\textwidth}
$$
\begin{array}{ll}
\dtermm{\dlam c.\dlet{c}{\dexify(c)}{}}       & \dtermm{c}:\typem({\chan{S}{\msg(P,S,\chanc{S}{x}\rightarrow\sunitc)\seqs\fix(\cdots)}})\\ 
\dtermm{\dlet{\dpair(c,f)}{\drecv(c)}{}}      & \dtermm{c}:\typem({\chan{S}{\fix(\lambda y{:}\ststypec.\msg(C,S,\chanc{S}{x})\seqs y)}})\\ 
\dtermm{\dkw(let)~\dtermt{loop}=\dfix g.\dlam x.\dlet{x}{\drecurse(x)}{}} & \dtermm{x}:\typem({\chan{S}{\msg(C,S,\chanc{S}{x})\seqs\fix(\cdots)}})\\ 
\dtermm{\dlet{\dpair(x,y)}{\drecv(x)}{\dapp(f,y);\dapp(g,x)}}       &\dtermm{y}:\typem({\chan{S}{x}})\\
\dtermm{\dkw(in)~\dapp(\dtermt{loop}, c)}
\end{array}
$$
\end{adjustbox}
\end{example}

On the first line, \(\dexify\) interprets \(\quan\) as existential, and
then immediately \rulename{ty-exists-elim} (\cref{fig:mtlcalltyping}) is
used to eliminate the quantifier. On the second line, \(S\) receives the
thread function that \(P\) wants to let \(S\) repeatedly serve. The
function can serve protocol \(\stypem(x)\), \emph{for any}
\(\stypem(x)\). On the third and fouth line, we define a
\emph{recursive} function (a feature in \(\ATS\) not covered here)
called \(\dtermt{loop}\). The function will, on the third line, unroll
the endpoint once, and on the forth line, call \(\drecv\) to receive an
endpoint of type \(\chan{S}{x}\) created by and sent from \(C\),
followed by invoking \(P\)-supplied function \(\dtermm{f}\) on
\(C\)-supplied enpoint \(\dtermm{y}\), providing service \(\stypem(x)\).
On the fifth line, \(S\) invokes the \(\dtermt{loop}\) function with
endpoint \(\dtermm{c}\).

\begin{wrapfigure}{R}{0.4\textwidth}
\begin{adjustbox}{minipage=0.6\textwidth,max width=0.4\textwidth}
$$
\begin{tikzcd}[column sep=normal]
(1)              & S \arrow[dash,dl] & (2)             & S\arrow[dash,dl] & \\
P \arrow[dash,r] & C \arrow[dash,u]  & P\arrow[dash,r] & C \arrow[dash,u] \arrow[r,dashed,"\text{fork}"] & C' \arrow[equal,ul,swap,"x"] 
\end{tikzcd}
$$
\end{adjustbox}
\end{wrapfigure}

The network topology is shown here. In (1), \(P\)/\(C\)/\(S\) is
connected in a 3-party session described above. Then client \(C\)
invokes \(\dfork\), spawning a new thread with a new channel, sending
one endpoint to \(S\), and starts a new 2-party session with protocol
\(\stypem(x)\) between \(S\) and a child thread of \(C\), denoted as
\(C'\).

This example can be successfully type-checked in our current
implementation of \(\Ldeppi\).

\hypertarget{deadlock-freeness-reducibility}{%
\subsection{Deadlock-freeness
Reducibility}\label{deadlock-freeness-reducibility}}

The technique of df-reducibility is introduced in our early work
\cite{Xi:2016ue} for \emph{binary} session types. It is adapted for
multiparty session types. The notion captures the invariance of pool
reduction that, at any time, there are no loops or self-loops in all
endpoint connections. The proof of \cref{lemma:mtlcsimpledeadlockfree}
carried out as follows. First, we show that reduction preserves
df-reducibility (\cref{lemma:mtlcsimplereducibility}). Then
df-reducibility implies relaxation (\cref{prop:deadlock}) which in turn
means reducible (\cref{prop:pigeonhole}). Notably, relaxation
(\cref{def:relaxed}) is \emph{not} an invariant during reduction, e.g.,
for the case of \rulename{pr-end}, and that is why we strengthened it
and formalized df-reducibility (\cref{def:dfreducible}). This proof
technique, and particularly \cref{prop:deadlock}, guided the choices of
when to return an endpoint to the caller and when to spawn a new thread,
like \(\daccept\)/\(\drequest\) discussed before.

\begin{definition}[Abstract Collections of Endpoints]
We use \(M\) to denote a finite \emph{set} of endpoints and \(\M\) to
denote a finite \emph{set} of \(M\) where all \(M\in\M\) are pair-wise
disjoint. We use \(\bigcup\M\) to mean the (disjoint) union of all
\(M\in\M\). For any channel \(\dtermm{c}\), we assume either
\(\endpoints(\dtermm{c})\subset\bigcup\M\) or
\(\endpoints(\dtermm{c})\cap\bigcup\M=\varnothing\).
\end{definition}

We lift the definition of \(\endpoints(\cdot)\) and \(\channels(\cdot)\)
onto \(M\) and \(\M\) to collect endpoints and channels from them.

\begin{definition}[Deadlock-free Reduction $\dfreduce$]
\label[def]{def:dfreduce} We write \(\M \dfreducevia(c) \M'\) if there
exists a channel \(\dtermm{c}\), some sets of endpoints \(M_i\) s.t.
\begin{gather*}
\endpoints(\dtermm{c})=\{\dtermm{c^{\rs_1}},\dotsc,\dtermm{c^{\rs_n}}\} \text{  and  } \dtermm{c^{\rs_i}} \in M_i \in \M \text{  for $1\leq i \leq n$}\\
\M'=(\M\setminus\{M_1,\dotsc,M_n\})\cup \{M'\} \text{  and  } M'=(M_1\cup\dotsb\cup M_n)\setminus\endpoints(\dtermm{c})
\end{gather*} We also write \(\M\dfreduce\M'\) if there exists some
\(\dtermm{c}\) s.t. \(\M\dfreducevia(c)\M'\). We say \(\M\) is df-normal
if it can not be further df-reduced, denoted as \(\M\dfnormal\).
\end{definition}

\begin{definition}[Deadlock-freeness Reducibility]
\label[def]{def:dfreducible} We write \(\pdfreducible(\M)\) if
\begin{itemize}
\tightlist
\item
  each \(M\in\M\) is an empty set, or
\item
  \(\M\) is not df-normal, and for any \(\M'\) where \(\M\dfreduce\M'\)
  holds, \(\pdfreducible(\M')\).
\end{itemize}
\end{definition}

There are certain properties about df-reducibility that are easy to
prove. We omit them and refer readers to \cite{Xi:2016ue}. We only note
that empty elements can always be removed from \(\M\) without breaking
df-reducibility.

\begin{definition}[Relaxed]
\label[def]{def:relaxed} Let \(|\M|\) be the number of \emph{non-empty}
elements. \[
\prelaxed(\M)\defeq 
\begin{cases}
|\M| \geq |\endpoints(\M)| - |\channels(\M)| + 1 \\
|\M| = 0
\end{cases}
\]
\end{definition}

\begin{proposition}[Pigeonhole]
\label[prop]{prop:pigeonhole} If \(\prelaxed(\M)\) and \(|\M|>0\) where
\(|\M|\) is the number of \emph{non-empty} elements, then
\(\M\dfreducevia(c)\M'\) for some \(\dtermm{c}\) and \(\M'\). Namely,
\(\M\) is not df-normal.
\end{proposition}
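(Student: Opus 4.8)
The plan is to read $\prelaxed(\M)$ as a counting inequality and then extract a firing channel by a double‑counting (pigeonhole) argument. Write $V=|\M|$ for the number of non‑empty members and $E=|\channels(\M)|$, and for each channel $\dtermm{c}\in\channels(\M)$ let $n_c=|\endpoints(\dtermm{c})|$ be its number of endpoints and $d(c)$ the number of \emph{distinct} members of $\M$ that contain an endpoint of $\dtermm{c}$. Since $|\M|>0$, some member is non‑empty and so holds an endpoint $\dtermm{c^{\rs}}$; by the standing assumption that $\endpoints(\dtermm{c})\subset\bigcup\M$ or $\endpoints(\dtermm{c})\cap\bigcup\M=\varnothing$, \emph{all} endpoints of that $\dtermm{c}$ are present, so $E\geq 1$ and, since members are pairwise disjoint, $|\endpoints(\M)|=\sum_c n_c$. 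The first case of $\prelaxed(\M)$ then rearranges to
\[
\sum_{c\in\channels(\M)} n_c \;=\; |\endpoints(\M)| \;\leq\; V+E-1 .
\]

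The observation I would use is a characterization of when a df‑reduction fires: $\M\dfreducevia(c)\M'$ is available exactly when the $n_c$ endpoints of $\dtermm{c}$ lie in $n_c$ \emph{pairwise‑distinct} members $M_1,\dots,M_{n_c}$, i.e.\ when $d(c)=n_c$; if instead two endpoints of $\dtermm{c}$ share a member (a self‑loop on $\dtermm{c}$) then $\dtermm{c}$ cannot drive a reduction. So it suffices to produce one channel with $d(c)=n_c$. Suppose, toward a contradiction, that every channel is self‑looped, i.e.\ $d(c)\leq n_c-1$ for all $\dtermm{c}$. Double‑counting the distinct channel/member incidences gives
\[
\sum_{c} d(c) \;=\; \sum_{M\in\M}\deg(M) \;\geq\; V ,
\]
where $\deg(M)$ counts the distinct channels meeting $M$, and the last inequality holds because each of the $V$ non‑empty members contains some endpoint and hence meets at least one channel. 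On the other hand, using $n_c-d(c)\geq 1$ and the relaxed inequality,
\[
\sum_{c} d(c) \;=\; \sum_{c} n_c-\sum_{c}(n_c-d(c)) \;\leq\; |\endpoints(\M)|-E \;\leq\; (V+E-1)-E \;=\; V-1 .
\]
Thus $V\leq V-1$, a contradiction; hence some channel $\dtermm{c}$ has $d(c)=n_c$, witnessing $\M\dfreducevia(c)\M'$, so $\M$ is not df‑normal.

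The delicate step, and the one I expect to be the real obstacle, is pinning down the exact firing condition for $\dfreduce$ and keeping the incidence count honest. Reducibility must be read as requiring the members $M_1,\dots,M_{n_c}$ to be \emph{distinct}; otherwise self‑loops would spuriously ``reduce'' and the statement would trivialize — and, tellingly, $\prelaxed$ would no longer separate a deadlock such as a single member holding both endpoints of one channel (which fails the inequality) from genuinely reducible configurations. Accordingly the counts $d(c)$ and $\deg(M)$ must be taken \emph{without multiplicity}, which is precisely what makes $\sum_c d(c)=\sum_{M}\deg(M)$ hold and what the pairwise‑disjointness of $\M$ guarantees, since then each endpoint sits in a unique member. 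Empty members may be discarded throughout, as already noted, so the bound $\deg(M)\geq 1$ legitimately applies to all $V$ members that $|\M|$ counts.
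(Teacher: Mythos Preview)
The paper does not actually supply a proof of this proposition; it simply names it ``Pigeonhole'' and later invokes ``the Pigeonhole Principle'' in the proof of \cref{lemma:mtlcsimpledeadlockfree} without spelling out the argument. So there is nothing to compare against beyond the name.

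Your double-counting argument is correct and is a natural way to make the intended pigeonhole reasoning precise. The key inequality chain $V \leq \sum_M \deg(M) = \sum_c d(c) \leq |\endpoints(\M)| - E \leq V-1$ is sound, and your use of the standing assumption that each channel's endpoints are either all present in $\bigcup\M$ or all absent is exactly what justifies $|\endpoints(\M)|=\sum_c n_c$ and $E\geq 1$.

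Your caveat about the firing condition is well taken and is the one genuinely delicate point. The paper's \cref{def:dfreduce} does not say in so many words that the $M_i$ must be pairwise distinct, but your reading is clearly the intended one: without it any channel fires and both the proposition and the $\prelaxed$ bound become vacuous, and the later use of $\dfreduce$ to model independent threads synchronizing on a channel would make no sense if two endpoints could sit in the same ``thread'' $M$. So you have identified the right interpretation and justified it; there is no gap in your argument.
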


\begin{proposition}
\label[prop]{prop:deadlock} \(\neg\prelaxed(\M)\) implies
\(\neg\pdfreducible(\M)\).
\end{proposition}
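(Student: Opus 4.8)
The plan is to establish the contrapositive, $\pdfreducible(\M)\Rightarrow\prelaxed(\M)$, since df-reducibility is the structurally richer hypothesis and comes equipped with an induction principle. Every legal df-reduction $\M\dfreducevia(c)\M'$ deletes the channel $\dtermm{c}$ in its entirety, so $|\channels(\M')|=|\channels(\M)|-1$ and the recursion in \cref{def:dfreducible} is well-founded; I would therefore induct on the df-reducibility derivation (equivalently, on $|\channels(\M)|$, which strictly decreases at each step). It is convenient to introduce the potential $\phi(\M)\defeq|\M|-|\endpoints(\M)|+|\channels(\M)|$, so that \cref{def:relaxed} reads exactly as ``$\phi(\M)\geq 1$ or $|\M|=0$''. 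The goal then becomes: $\pdfreducible(\M)$ implies $\phi(\M)\geq 1$ or $|\M|=0$.

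For the base case of \cref{def:dfreducible} every $M\in\M$ is empty, hence $|\M|=0$ and $\prelaxed(\M)$ holds via its second clause. For the inductive case $\M$ is not $\pnormal$, so I would fix one witnessing reduction $\M\dfreducevia(c)\M'$; by \cref{def:dfreducible} this $\M'$ is again df-reducible, and the induction hypothesis yields $\prelaxed(\M')$. The crux is to compute how $\phi$ changes across this single step. The step deletes exactly the $n$ endpoints of $\dtermm{c}$ (so $|\endpoints|$ drops by $n$), removes only $\dtermm{c}$ from the channel set (so $|\channels|$ drops by $1$ — no other channel loses endpoints; each is merely relocated into the merged class, and distinct channels have disjoint endpoint sets), and replaces the classes $M_1,\dots,M_n$ holding those endpoints by the single class $M'$. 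Writing $\epsilon\in\{0,1\}$ for whether $M'$ is non-empty, the non-empty-class count satisfies $|\M'|=|\M|-n+\epsilon$, and combining the three changes gives the clean identity $\phi(\M)=\phi(\M')+(1-\epsilon)$.

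The structural point I must justify is that a legal df-reduction contracts $n$ \emph{distinct} classes $M_1,\dots,M_n$: a channel is reducible precisely when its $n$ endpoints occupy $n$ different members of $\M$, which is exactly the loop-free situation, since if two endpoints of $\dtermm{c}$ already shared a class then contracting $\dtermm{c}$ would close a cycle and \cref{def:dfreduce} would not apply. This distinctness is what makes the class count drop by exactly $n-1$ (before the $\epsilon$ correction) rather than less, and hence what pins down the identity above. From $\prelaxed(\M')$ I then conclude $\phi(\M)\geq 1$ in both cases. If $\epsilon=1$, then $M'$ is non-empty, so $|\M'|>0$, so the relaxedness of $\M'$ must come from its first clause, giving $\phi(\M')\geq 1$ and $\phi(\M)=\phi(\M')\geq 1$. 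If $\epsilon=0$, then $\prelaxed(\M')$ forces $\phi(\M')\geq 0$ in either clause (the all-empty case $|\M'|=0$ yields $\phi(\M')=0$), whence $\phi(\M)=\phi(\M')+1\geq 1$. In both cases $\phi(\M)\geq 1$, i.e.\ $\prelaxed(\M)$.

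I expect the main obstacle to be the careful bookkeeping of the \emph{non-empty} class count, i.e.\ the $\epsilon$ term, together with cleanly tying the ``distinct classes'' requirement to \cref{def:dfreduce} (this is the substantive content — without it the loop-detecting behaviour of $\dfreduce$ collapses and the claim fails). The degenerate subcase $|\M'|=0$, where $\M'$ is relaxed only through its second clause, has to be dispatched separately, but it closes neatly because $|\M'|=0$ forces both $\phi(\M')=0$ and $\epsilon=0$. Once the identity $\phi(\M)=\phi(\M')+(1-\epsilon)$ is in hand the remaining arithmetic is routine.
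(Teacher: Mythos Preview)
Your argument is correct. The paper does not actually supply a proof of \cref{prop:deadlock}; it is stated without proof, with the surrounding remark that such properties ``are easy to prove'' and a pointer to \cite{Xi:2016ue}, so there is nothing to compare against here. Your contrapositive induction on $|\channels(\M)|$ via the potential $\phi(\M)=|\M|-|\endpoints(\M)|+|\channels(\M)|$ and the identity $\phi(\M)=\phi(\M')+(1-\epsilon)$ is clean and the case split on $\epsilon$ closes correctly, including the degenerate $|\M'|=0$ subcase. The one point to make explicit is your reliance on the $M_i$ in \cref{def:dfreduce} being pairwise distinct: the definition as written does not literally say this, but your reading is the intended one---if two endpoints of $\dtermm{c}$ could share a class and the step were still legal, then $\M=\{\{c^{\rs_1},c^{\rs_2}\}\}$ with $\endpoints(\dtermm{c})=\{c^{\rs_1},c^{\rs_2}\}$ would be df-reducible yet not relaxed, contradicting the proposition itself. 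So the distinctness is forced by the proposition and by the stated purpose of $\dfreduce$ (``no loops or self-loops''); you are right to isolate it as the substantive structural hypothesis.
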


We now make \(M\) and \(\M\) concrete. Let
\(\pconsistent(\rho(\dpool))\), we define \(M(\dtermm{\dpool(t)})\) as
\(\endpoints(\rho(\dtermm{\dpool(t)}))\), \(\M(\dpool)\) as
\(\bigcup_t\, \{M(\dtermm{\dpool(t)})\}\), \(\prelaxed(\dpool)\) as
\(\prelaxed(\M(\dpool))\), and \(\pdfreducible(\dpool)\) as
\(\pdfreducible(\M(\dpool))\). We also define \(\pblocked(\dtermm{e})\)
as \(\dtermm{e} = \dtermm{E[e']}\) for some evaluation context \(E\) and
partial redex \(\dtermm{e'}\). A blocked expression is blocked on some
endpoint \(\dtermm{c^r}\) of some channel \(\dtermm{c}\). We write
\(\pblocked(\dtermm{e},\dtermm{c^r})\) or
\(\pblocked(\dtermm{e},\dtermm{c})\) to make it explicit.

\begin{lemma}[Reducibility of Pools]
\label[lemma]{lemma:mtlcsimplereducibility} Pool reduction preserves
df-reducibility. \(\pconsistent(\rho(\dpool))\),
\(\pdfreducible(\dpool)\), and \(\dpool\creduce\dtermm{\dpool'}\)
implies \(\pdfreducible(\dtermm{\dpool'})\).
\end{lemma}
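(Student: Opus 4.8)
The plan is to work entirely at the level of the abstract endpoint collection $\M(\dpool)$ and to track how a single pool-reduction step $\dpool \creduce \dtermm{\dpool'}$ rewrites it into $\M(\dtermm{\dpool'})$. The guiding picture, made precise by the easy properties of $\dfreduce$ recorded in \cite{Xi:2016ue}, is that $\pdfreducible(\M)$ holds exactly when the hypergraph whose nodes are the non-empty $M\in\M$ and whose hyperedges are the channels (each channel $\dtermm{c}$ incident to the nodes holding its endpoints) has no loop or self-loop, i.e.\ it is a forest. A single $\dfreducevia(c)$ step is precisely the contraction of the hyperedge $\dtermm{c}$, merging the \emph{distinct} nodes that carry its endpoints and deleting those endpoints; contracting a forest always yields a forest and eventually the empty collection, whereas any cycle is driven into a non-contractible self-loop along some reduction path and is therefore not df-reducible. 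Granting this characterization, it suffices to check that each rule of \cref{fig:mpdepred} preserves acyclicity of this hypergraph, and along the way that $\pconsistent(\rho(\dtermm{\dpool'}))$ continues to hold so that $\M(\dtermm{\dpool'})$ is well formed (a routine companion invariant, which together with the typing discipline also underwrites the resource bookkeeping below).

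I would first dispatch the cases in which $\M$ is unchanged or merely shrinks. For \rulename{pr-lift} the step is an internal, non-communicating reduction, and since linearity forbids duplicating or discarding resources one has $\rho(\dtermm{e})=\rho(\dtermm{e'})$, so $\M(\dpool)=\M(\dtermm{\dpool'})$. For \rulename{pr-bmsg} the broadcast value has non-linear type, hence $\rho(v)=\varnothing$, and for \rulename{pr-quan} the rule only rewrites types; in both the set of incidences is untouched. \rulename{pr-gc} deletes a thread whose node is empty and \rulename{pr-elim} deletes a single $\varnothing$-role endpoint, and removing an empty node or a lone incidence can never create a cycle. Forking and splitting are the two node-splitting cases: \rulename{pr-fork} replaces one node by two joined through a fresh binary channel $\dtermm{c}$, adding a new leaf to the forest that cannot close a cycle because $\dtermm{c}$ is fresh; \rulename{pr-split} divides a node into two while distributing the remaining incidences disjointly, and splitting a node of a forest is always acyclicity-preserving.

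The three delicate cases are exactly those where the novel features appear. In \rulename{pr-end} the channel $\dtermm{c}$ is closed: its endpoints are deleted from all the nodes that held them, but those nodes are \emph{not} merged. Abstractly this is edge-deletion from a forest, which trivially preserves acyclicity; the point to stress is that it is \emph{not} the merge a $\dfreduce$ step would perform, which is precisely why $\prelaxed$ fails to be an invariant here -- two nodes may drop to empty and violate the counting bound of \cref{def:relaxed} -- while df-reducibility, being the structural statement, survives. In \rulename{pr-msg} a \emph{linear} message $v$ is transferred from sender $t_1$ to receiver $t_2$, so the endpoints in $\rho(v)$ migrate from the node of $t_1$ to that of $t_2$; because $t_1$ and $t_2$ are adjacent through $\dtermm{c}$ and the source was acyclic, relocating those incidences to the adjacent node cannot close a cycle (their far ends reached $t_1$ only through the edges being moved, and no endpoint of $\dtermm{c}$ itself can occur in $v$ by consistency). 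This higher-order transfer of endpoints is the main reason the multiparty argument is more involved than the binary one. Finally, \rulename{pr-cut} fuses two channels $\dtermm{c_1},\dtermm{c_2}$ meeting in the single thread $t$ into one channel $\dtermm{c}$, returning the residual endpoint $\dtermm{c^{\rs_1\cap\rs_2}}$; here I would use acyclicity of $\M(\dpool)$ to argue that the hyperedges $\dtermm{c_1}$ and $\dtermm{c_2}$ share only $t$ (otherwise they would already form a cycle), so gluing two trees at a cut vertex again yields a tree.

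I expect \rulename{pr-cut} to be the principal obstacle, since it is the one step that alters the edge structure non-locally -- renaming every endpoint of two channels at once -- and its correctness rests entirely on reading acyclicity off the source configuration; the higher-order \rulename{pr-msg} case is a close second, because one must verify that no transferred endpoint closes a cycle against the continuation it lands in. In each case the argument simultaneously discharges the consistency obligation, using that the dc-types in \cref{fig:mpdepsyntax} force the role sets to partition $\fullset$ (e.g.\ $\rs_1\cup\rs_2=\fullset$ with residual $\rs_1\cap\rs_2$ for $\dcut$). As the paper notes, it is exactly this invariant -- and the acyclicity reading of $\dfreduce$ in particular -- that dictates when an endpoint may be returned to the caller versus handed to a freshly forked thread, so the delicate cases are precisely where the API design is pinned down.
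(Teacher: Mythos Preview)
The paper, being an extended abstract, states this lemma without proof; the surrounding text only defers the ``easy properties'' of $\dfreduce$ to \cite{Xi:2016ue}, which treats the binary case. So there is no in-paper argument to compare against.

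Your plan---reify $\pdfreducible$ as acyclicity of the endpoint hypergraph and then case-split on the rules of \cref{fig:mpdepred}---is the natural multiparty extension of the binary argument and matches the paper's informal gloss (``no loops or self-loops in all endpoint connections'') as well as the contrapositive content of \cref{prop:deadlock}. Your identification of \rulename{pr-msg} and \rulename{pr-cut} as the delicate cases also lines up with the paper's remark that higher-order sessions and forwarding are what make the proof ``much more involved.'' Two refinements are worth flagging. First, in \rulename{pr-split} the endpoint $c^{\rs_1\uplus\rs_2}$ is itself split into $c^{\rs_1}$ and $c^{\rs_2}$, so the hyperedge $c$ gains one incidence at the same moment the node count grows by one; this is more than ``distributing the remaining incidences disjointly,'' though the forest balance is indeed preserved. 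Second, for \rulename{pr-msg} a cleaner closure than your adjacency argument is to observe that df-reducing via $c$ \emph{after} the transfer yields literally the same collection as df-reducing via $c$ \emph{before} (the endpoints in $\rho(v)$ land in the merged node either way), so acyclicity transfers directly through the characterization. Finally, one point you should make explicit when writing this out: in the multiparty setting ``acyclic'' must mean Berge-acyclic (the bipartite incidence graph of nodes and channels is a forest), since that is the notion under which a single step $\dfreducevia(c)$ is exactly hyperedge contraction with the $M_i$ pairwise distinct; the binary reference does not need to draw this distinction, and \cref{def:dfreduce} leaves the distinctness of the $M_i$ implicit.
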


\begin{lemma}[Deadlock-free]
\label[lemma]{lemma:mtlcsimpledeadlockfree} Let \(\dpool\) be a
well-typed pool s.t. \(\dtermm{\dpool(0)}\) is either a value
\(\dtermm{v}\) without endpoints or \(\pblocked(\dtermm{\dpool(0)})\),
and \(\pblocked(\dtermm{\dpool(t)})\) for \(0<t\in\dom(\dpool)\). If
\(\dpool\) is obtained from evaluating an initial pool without any
channels, then there exist \(t_1,\dotsc,t_n\in\dom(\dpool)\) s.t.
\(\pmatch(\{\dtermm{\dpool(t_1)},\dotsc, \dtermm{\dpool(t_n)}\})\).
\end{lemma}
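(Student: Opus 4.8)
The plan is to follow the implication chain flagged before the statement. First I would note that the hypothesis ``obtained from evaluating an initial pool without any channels'' is exactly what seeds deadlock-freeness reducibility: the initial pool $\dpool_0$ has no channels, so $\rho$ of each thread contains no endpoints, every $M\in\M(\dpool_0)$ is empty, and the first clause of \cref{def:dfreducible} yields $\pdfreducible(\dpool_0)$. Writing the evaluation as $\dpool_0\creduceall\dpool$ and applying \cref{lemma:mtlcsimplereducibility} at each step --- using that resource consistency $\pconsistent(\rho(\cdot))$ is itself a reduction invariant --- I obtain both $\pconsistent(\rho(\dpool))$ and $\pdfreducible(\dpool)$, i.e.\ $\pdfreducible(\M(\dpool))$.

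Next I would convert reducibility into a structurally enabled communication. The contrapositive of \cref{prop:deadlock} turns $\pdfreducible(\M(\dpool))$ into $\prelaxed(\M(\dpool))$. In the non-terminal case at least one thread is blocked; a blocked thread sits at $\dtermm{E[e']}$ with $\dtermm{e'}$ a partial redex mentioning an endpoint $\dtermm{c^{\rs}}$, so its $M$ is nonempty and $|\M(\dpool)|>0$ (when instead $\dpool(0)$ is a terminal value and no thread is blocked, the pool is terminated and this case is dispatched separately). \cref{prop:pigeonhole} then certifies that $\M(\dpool)$ is not df-normal, so a communication is available somewhere.

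The delicate step is to pin down a channel that actually \emph{matches}, because a thread may hold an endpoint of $\dtermm{c}$ yet be blocked on a different channel --- higher-order message passing and \dcut-forwarding let a thread carry several endpoints while waiting elsewhere. Here I would exploit the acyclicity that $\prelaxed$ encodes: reading $\M(\dpool)$ as a thread--channel incidence hypergraph, relaxation forces a forest and in particular forbids two endpoints of one channel from lying in a single thread. I would then order channels by $\dtermm{c}\Rightarrow\dtermm{c'}$, meaning some holder of an endpoint of $\dtermm{c}$ is blocked on $\dtermm{c'}\neq\dtermm{c}$, and argue $\Rightarrow$ is acyclic: a cycle would make consecutive channels share a common incident thread, yielding a cycle in the incidence hypergraph and contradicting the forest shape. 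A $\Rightarrow$-sink $\dtermm{c^{\ast}}$ therefore exists, and by construction every thread holding an endpoint of $\dtermm{c^{\ast}}$ is blocked on $\dtermm{c^{\ast}}$ itself; by \cref{def:mpdepconsistent} these endpoints lie in distinct threads $t_1,\dotsc,t_n$ whose role sets partition $\fullset$.

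Finally the match is forced by typing. By the \textbf{sig-chan} rule all endpoints of $\dtermm{c^{\ast}}$ carry $\chan{\rs_i}{\pi}$ for one common session head $\spi$, so a single protocol head constrains all the blocked operations simultaneously, and the guards of the dc-types resolve the roles uniquely: for $\msg(r,\tau)\seqs\pi$ exactly the endpoint with $r\in\rs_i$ is poised on \dbsend and the others on \dbrecv; for $\msg(r_1,r_2,\hat\tau)\seqs\pi$ the partition forces one \dsend, one \drecv, and \dskip\ at the rest; and $\nil(r)$ and $\quan(r,\cdot)$ force \dclose/\dwait and \dunify/\dexify. This is precisely one clause of \cref{def:mpdepmatch}, so $t_1,\dotsc,t_n$ are the required threads. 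I expect the acyclicity-to-sink argument, not the typing bookkeeping, to be the main obstacle, since that is where higher-order endpoints and forwarding must be reconciled with the relaxation invariant; the real technical weight is of course carried by the assumed \cref{lemma:mtlcsimplereducibility}.
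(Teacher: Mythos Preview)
Your overall arc matches the paper's: seed df-reducibility at the channel-free initial pool, propagate it together with consistency along $\creduceall$ via \cref{lemma:mtlcsimplereducibility} and \cref{thm:mpdepsubjectreduction}, extract $\prelaxed(\dpool)$ from \cref{prop:deadlock}, locate a channel on which every holder is blocked, and then read off $\pmatch$ from the coherent endpoint typings guaranteed by \rulename{sig-chan}. The typing case analysis in your final paragraph is exactly what the paper compresses into ``straightforwardly examining the typing derivations of the partial redexes''.

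The gap is in your third paragraph. You assert that $\prelaxed(\M)$ forces the thread--channel incidence hypergraph to be a forest and in particular forbids two endpoints of one channel from sitting in a single thread. Both assertions are false of $\prelaxed$ alone. For the second: take $M_1=\{c^{\rs_1},c^{\rs_2}\}$, $M_2=\{d^{\rs_1}\}$, $M_3=\{d^{\rs_2}\}$ with binary $c,d$; then $|\M|=3\geq 4-2+1$, so $\prelaxed$ holds while both endpoints of $c$ lie in $M_1$. For the first: a binary-channel triangle on $M_1,M_2,M_3$ together with a disjoint binary edge $M_4\text{--}M_5$ gives $|\M|=5\geq 8-4+1$, so $\prelaxed$ holds with a genuine cycle. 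The forest property \emph{does} follow from $\pdfreducible(\M)$ --- a single df-reduction step preserves the cycle rank of the bipartite incidence graph (threads and channels as vertices, endpoints as edges), and the terminal all-empty configuration has cycle rank zero --- so your $\Rightarrow$-sink argument can be salvaged by grounding the forest claim in df-reducibility, which you already have, rather than in relaxation.

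The paper, however, sidesteps the whole acyclicity detour. Its ``parallel to \cref{prop:pigeonhole}, by the Pigeonhole Principle'' is a direct count: each nonempty $M$ is a blocked thread, and a blocked thread is blocked on exactly one of its endpoints, so exactly $|\M|$ endpoints are ``blocked-on''. If every channel had at least one endpoint whose holder is blocked elsewhere, that would supply $|\channels|$ distinct non-blocked-on endpoints, hence at most $|\endpoints|-|\channels|$ blocked-on ones, contradicting $|\M|\geq|\endpoints|-|\channels|+1$. Therefore some channel $c$ has every endpoint blocked-on, i.e.\ each thread holding an endpoint of $c$ is blocked on that very endpoint; these are the required $t_1,\dotsc,t_n$. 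This uses only $\prelaxed$ and the blocked-thread hypothesis, with no appeal to forest structure.
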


\begin{proof}
The initial pool is consistent by \cref{def:mpdepconsistent} and
df-reducible by \cref{def:dfreducible} since it contains no endpoints.
Therefore \(\pdfreducible(\dpool)\) by
\cref{lemma:mtlcsimplereducibility} and \(\pconsistent(\dpool)\) by
\cref{thm:mpdepsubjectreduction}. By \cref{prop:deadlock} we have
\(\prelaxed(\dpool)\). Parallel to \cref{prop:pigeonhole}, by the
Pigeonhole Principle, there exist \(t_1,\dotsb,t_n\in\dom(\dpool)\) s.t.
\(\pblocked(\dtermm{\dpool(t_1)},\dtermm{c^{\rs_1}})\),\(\dotsb\), and
\(\pblocked(\dtermm{\dpool(t_n)}, \dtermm{c^{\rs_n}})\) for some channel
\(\dtermm{c}\) where
\(\endpoints(\dtermm{c})=\{\dtermm{c^{\rs_1}},\dotsc,\dtermm{c^{\rs_n}}\}\).
Since \(\dpool\) is well-typed and consistent, these endpoints are
assigned coherent types by rule \rulename{sig-chan}. Therefore we have
\(\pmatch(\{\dtermm{\dpool(t_1)},\dotsb,\dtermm{\dpool(t_n)}\})\) by
straightforwardly examining the typing derivations of the partial
redexes.
\end{proof}

\hypertarget{soundness}{%
\subsection{Soundness}\label{soundness}}

\begin{theorem}[Subject Reduction]
\label[thm]{thm:mpdepsubjectreduction} Assume
\(\vdash\dtermm{\dpool_1}:\typem(\hat\tau)\) under some signature
\(\sig_1\), \(\pconsistent(\rho(\dtermm{\dpool_1}))\), and
\(\dtermm{\dpool_1}\creduce\dtermm{\dpool_2}\) for some
\(\dtermm{\dpool_2}\). Then \(\vdash\dtermm{\dpool_2}:\typem(\hat\tau)\)
under a cooresponding signature \(\sig_2\) and
\(\pconsistent(\rho(\dtermm{\dpool_2}))\).
\end{theorem}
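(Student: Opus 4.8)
The plan is to argue by cases on the pool-reduction rule applied in $\dpool_1\creduce\dpool_2$, establishing in each case the two conjuncts separately: preservation of typing under a suitably updated signature $\sig_2$, and preservation of resource consistency $\pconsistent(\rho(\dpool_2))$. By \rulename{ty-pool} it suffices to re-derive $\vdash\dpool_2(0):\typem(\hat\tau)$ together with $\vdash\dpool_2(i):\sunit$ for the remaining threads; since every rule touches only finitely many threads and copies the rest of $\dpool$ verbatim, I only re-type the threads that change. Two tools are used throughout. First, an inversion principle for Session-API applications: because each $\dcf$ draws its dc-type from $\sig$ and every application is built by \rulename{ty-forall-elim}/\rulename{ty-guard-elim} followed by application, typing e.g. $\dbsend(c^{\rs_1},v):\chan{\rs_1}{\pi}$ forces $\sig_1$ to contain $c{:}\msg(r,\tau)\seqs\pi$, forces $v:\tau$, and forces the guard $r\in\rs_1$ to be derivable. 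Second, a substitution/renaming lemma: substituting a value for a linear variable, and globally renaming channel names, both preserve typing once the signature is updated consistently.

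For the local rules (\rulename{pr-fork}, \rulename{pr-split}, \rulename{pr-elim}) and the pure ones (\rulename{pr-lift}, \rulename{pr-gc}) the argument is direct. The pure step appeals to $\ATS$'s subject reduction for the single-thread reduction and leaves $\sig$ and $\rho(\dpool)$ unchanged, the added endpoints behaving as opaque linear constants under $\vreduce$ (linearity keeps $\rho$ invariant); \rulename{pr-gc} discards a thread whose value is $\dunit$, which carries no endpoints. For \rulename{pr-fork} I set $\sig_2=\sig_1,c{:}\pi$ for the fresh $c$; inversion of $\dfork$ yields the guard $\rs_1\uplus\rs_2=\fullset$, so the two new endpoints $c^{\rs_1},c^{\rs_2}$ partition $\fullset$, and the substitution lemma types $e\map(x,c^{\rs_1})$ at $\sunit$. \rulename{pr-split} refines one endpoint's roles into a disjoint pair using the guard $\rs_1\cap\rs_2=\varnothing$, and \rulename{pr-elim} drops an endpoint of role $\varnothing$; in both, the partition of $\fullset$ for $c$ is maintained.

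The substantive cases are the synchronizing rules (\rulename{pr-bmsg}, \rulename{pr-msg}, \rulename{pr-end}, \rulename{pr-quan}), which by their side condition fire only when all of $\endpoints(c)=\{c^{\rs_1},\dotsc,c^{\rs_n}\}$ are present, one per participating thread. Inverting the $n$ redexes shows they all read the same head of $c$'s unique session type in $\sig_1$; because $\{\rs_1,\dotsc,\rs_n\}$ partitions $\fullset$, the guards single out exactly one sender (the unique $\rs_i\ni r$) with the rest receiving or skipping, exactly matching the rule. I then advance $\sig_1$ to $\sig_2$ by replacing $c$'s session type with its continuation $\pi$ (deleting $c$ outright for \rulename{pr-end}); under $\sig_2$ each participating thread's hole is refilled with the rule's output, namely $c^{\rs_i}$ for the sender, the pair $\dpair(c^{\rs_i},v)$ of type $\chan{\rs_i}{\pi}\otimes\tau$ for a receiver, or $\dunit$ for \rulename{pr-end}, each of which carries under $\sig_2$ precisely the type the Session API returned and hence the type the surrounding $E$ demands. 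Since no endpoint's roles change, consistency is immediate, and for \rulename{pr-end} the entire channel leaves $\rho$ without affecting any other channel.

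The main obstacle is \rulename{pr-cut}, where two distinct channels $c_1,c_2$ are fused by the pool-wide renaming $\map({c_1,c_2},{c,c})$. I must first check that this renaming preserves typing, which is exactly where it matters that $\dcut$'s dc-type forces $c_1$ and $c_2$ to carry the \emph{same} session type $\pi$: setting $\sig_2$ to contain the single entry $c{:}\pi$ then leaves every renamed endpoint $c^{\rs}$ with its former type via \rulename{sig-chan}. The delicate part is consistency of the merged channel: the surviving endpoints of $c_1$ partition $\overline{\rs_1}$, those of $c_2$ partition $\overline{\rs_2}$, and the residual contributes $\rs_1\cap\rs_2$; using the guard $\rs_1\cup\rs_2=\fullset$ one gets $\overline{\rs_1}\cap\overline{\rs_2}=\varnothing$ and $(\rs_1\cap\rs_2)\uplus\overline{\rs_1}\uplus\overline{\rs_2}=\fullset$, so the merged endpoints again partition $\fullset$ (the pool equivalence \rulename{pe-cut} is harmless, $\dcut$ being symmetric in its two endpoint arguments). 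A secondary subtlety is \rulename{pr-quan}: since the returned bare endpoint $c^{\rs_i}$ must acquire the quantified result type of $\dunify$/$\dexify$, I rely on the type-equality relation identifying $\chan{\rs}{\quan(r,f)}$ with $\typem({\forall a{:}\sigma.\chan{\rs}{f(a)}})$ when $r\in\rs$ and with $\typem({\exists a{:}\sigma.\chan{\rs}{f(a)}})$ when $r\notin\rs$, so that these proof-function steps preserve typing without any runtime content.
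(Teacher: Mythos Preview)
Your proposal is correct and follows exactly the approach the paper indicates: the paper's own proof is the single sentence ``Proof by induction on the derivation of $\dtermm{\dpool_1}\creduce\dtermm{\dpool_2}$,'' and your case analysis on the pool-reduction rules is precisely what that induction unfolds to, only with the details actually supplied. Your treatment of \rulename{pr-cut} and the signature-advancement for the synchronizing rules is the intended content of that one line; the one place where you go beyond the paper is the \rulename{pr-quan} subtlety, which the paper does not discuss at all.
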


\begin{proof}
Proof by induction on the derivation of
\(\dtermm{\dpool_1}\creduce\dtermm{\dpool_2}\).
\end{proof}

\begin{theorem}[Progress]
\label[thm]{thm:mpdepprogress} Assume \(\vdash\dpool:\typem(\hat\tau)\)
and \(\pconsistent(\dpool)\). We have the following possibilities:
\begin{itemize}
\tightlist
\item
  \(\dpool\) is a singleton mapping \(\dtermm{0\mapsto v}\).
\item
  \(\dpool\creduce\dtermm{\dpool'}\) holds for some
  \(\dtermm{\dpool'}\).
\end{itemize}
\end{theorem}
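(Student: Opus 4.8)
The strategy is to combine the single-thread progress property of $\ATS$ (cited above) with \cref{lemma:mtlcsimpledeadlockfree}, which discharges the genuinely multiparty obstacle. The argument is a case analysis on the expressions $\dtermm{\dpool(t)}$ running in the pool. First I would use single-thread progress to classify each thread: every closed, well-typed $\dtermm{\dpool(t)}$ is either a value or decomposes as $\dtermm{E[e']}$ for an evaluation context $E$ and an innermost redex $\dtermm{e'}$ (by call-by-value this is reached after the arguments of any session API have already been evaluated to endpoint values $c^{\rs}$).

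I then dispatch every case in which some thread can make \emph{local} progress. If some $\dtermm{e'}$ is a pure redex, or an ad-hoc redex $\dtermm{\dcf(\vv{v})}$ with $\dcf$ defined on its arguments, the thread steps by $\reduce$ and the pool reduces by \rulename{pr-lift}. If some $\dtermm{e'}$ is $\dfork(\dlam x.e)$, $\dcut(\cdots)$, $\delim(\dtermm{c^{\varnothing}})$, or $\dsplit(\cdots)$, the pool reduces by \rulename{pr-fork}, \rulename{pr-cut}, \rulename{pr-elim}, or \rulename{pr-split} respectively, since each of these rules rewrites a single thread (possibly spawning a new one). In all such cases we land in the second alternative $\dpool\creduce\dtermm{\dpool'}$ and are done. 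If instead some non-main thread $t>0$ is a value, then by \rulename{ty-pool} it has type $\sunit$, so by canonical forms it is $\dunit$, which owns no endpoints, and \rulename{pr-gc} applies; again $\dpool\creduce\dtermm{\dpool'}$.

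Once these reductions are ruled out, every non-main thread is blocked on a partial redex and $\dtermm{\dpool(0)}$ is either a value or blocked. If $\dtermm{\dpool(0)}$ is a value and $\dom(\dpool)=\{0\}$, then $\dpool$ is the singleton $\dtermm{0\mapsto v}$ and we are in the first alternative. Otherwise I invoke \cref{lemma:mtlcsimpledeadlockfree}: consistency holds by hypothesis and is maintained by \cref{thm:mpdepsubjectreduction}, while df-reducibility holds because the pool descends by evaluation from a channel-free initial pool and is preserved by \cref{lemma:mtlcsimplereducibility}. The lemma then produces threads $t_1,\dots,t_n$ with $\pmatch(\{\dtermm{\dpool(t_1)},\dots,\dtermm{\dpool(t_n)}\})$, and a matching bundle of partial redexes fires exactly one of \rulename{pr-bmsg}, \rulename{pr-msg}, \rulename{pr-end}, or \rulename{pr-quan}, so $\dpool\creduce\dtermm{\dpool'}$ and we are again in the second alternative.

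The main obstacle is exactly the fully blocked configuration, where no thread can step in isolation; ruling this out \emph{is} the content of \cref{lemma:mtlcsimpledeadlockfree}, so the real work is arranging its hypotheses. The delicate point is the status of a \emph{finished} main thread: if $\dtermm{\dpool(0)}$ were a value still owning an endpoint awaited by the blocked threads, the pigeonhole argument would find no channel all of whose participants are blocked on it, and the pool would be genuinely stuck. This is excluded because the configuration arises from an initial pool without channels, along which df-reducibility is invariant (\cref{lemma:mtlcsimplereducibility}), so no cyclic or self-referential waiting can form. Threading this provenance---implicit in the ``well-typed and consistent'' hypotheses of the statement---through to the stronger premises of \cref{lemma:mtlcsimpledeadlockfree} is the one piece of bookkeeping that requires care; the remaining cases are the routine canonical-forms and context-decomposition arguments standard for progress.
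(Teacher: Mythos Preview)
Your proposal is correct and follows essentially the same approach as the paper: a case analysis on the state of each thread, with \cref{lemma:mtlcsimpledeadlockfree} handling the genuinely hard case where every thread is blocked on a partial redex. The paper's own proof is a single sentence pointing to that lemma; you have faithfully filled in the routine local-progress cases (\rulename{pr-lift}, \rulename{pr-fork}, \rulename{pr-cut}, \rulename{pr-elim}, \rulename{pr-split}, \rulename{pr-gc}) that the paper leaves implicit. You also correctly flag that the provenance hypothesis (``obtained from an initial pool without channels'') required by \cref{lemma:mtlcsimpledeadlockfree} is not stated in the theorem and must be read as implicit---the paper's proof makes the same silent assumption.
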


\begin{proof}
In the case where all threads are blocked expressions,
\cref{lemma:mtlcsimpledeadlockfree} is needed.
\end{proof}

\hypertarget{related-works-and-conclusions}{%
\section{Related Works and
Conclusions}\label{related-works-and-conclusions}}

To address a problem found in higher-order sessions that breaks type
preservation theorem \cite{Yoshida:2007hk}, \emph{polarity} and
\emph{balanced typing} \cite{Gay:2005gc,Lindley:2015cr} are introduced
to distinguish the two ends of a channel syntactically, and to ensure
their typing duality, respectively. The advantage of our formulation is
that endpoint types are inherently balanced by \rulename{sig-chan},
given that \(\sig\) only records global session types, and that all
roles of endpoints of a channel should form a partition of \(\fullset\)
due to \cref{def:mpdepconsistent}. Also, we found our approach much
cleaner for generalizing into multiparty session types compared to the
polarized approach. \cite{Honda:2008hi,Bejleri:2009eo,Yoshida:2010ht}
explored multiparty session types, \cite{Toninho:2011in,Pfenning:2011ce}
explored binary dependent session types. However, the present work is
the first to combine dependent types with multiparty session types.
\cite{Abramsky:1993io,Abramsky:1994ez,Bellin:1994ua} explored the
connection of process calculi with linear logic.
\cite{Caires:2010gi,Wadler:2012ua} established a Curry-Howard
correspondence between session typed process calculus with propositions
in linear logic. Later works
\cite{Carbone:2015hl,Carbone:2016kd,Carbone:2017ki} developed a
generalized notion of duality called coherence, to correspond multiparty
session types with propositions in linear logic equipped with a separate
proof system for coherence. We consider their formulation as an
extension instead of a generalization, since the coherence rule is a
separate proof system, and the well-known duality of the axiom rule and
the cut rule is lost. We consider our work as a formal generalization,
with LK/CLL being a special case of MRL/LMRL. Also, results like
\rulename{2-cut-residual} are made possible only via multirole.

We have demonstrated the first formulation of multiparty dependent
session types, the df-reducibility proof technique, and the intuitions
behind multirole logic. We point out that by representing session types
as program terms, by implementing propositions in the guarded types of
the APIs as runtime assertions, one can still greatly benefit from the
ability to deterministically know in advance about whether the system is
deadlock-free, \emph{even in a language without dependent types and
linear types}. This is precisely the significance of our practical
system. With the help of such formal reasoning, concurrency can be made
better, safer, and more accessible.
\bibliographystyle{splncs04}
\bibliography{library}
%




\end{document}